\documentclass[11pt,a4paper]{scrartcl}
\usepackage[margin=0.9in,bottom=1.2in]{geometry}
\usepackage{amsfonts,amssymb,amsmath}
\usepackage[thmmarks,hyperref,amsthm,amsmath]{ntheorem} 
\usepackage{graphicx}
\usepackage[usenames,dvipsnames]{color}
\usepackage{hyperref}
\usepackage[utf8]{inputenc}
\usepackage{pxfonts}
\usepackage[labelfont=bf,font=small]{caption}
\usepackage{authblk}
\usepackage{xspace}

\newtheorem{theorem}{Theorem}
\newtheorem{lemma}[theorem]{Lemma}

\title{Dynamic beats fixed: On phase-based algorithms for file migration\footnote{%
The paper was supported by Polish National Science Centre grants 2016/22/E/ST6/00499
and 2015/18/E/ST6/00456. The work of M.~Mucha is part of a project TOTAL that has
received funding from the European Research Council (ERC) under the European
Union’s Horizon 2020 research and innovation programme (grant agreement
No~677651). A preliminary version of this paper appeared in the proceedings of
the 44th International Colloquium on Automata, Languages, and Programming
(ICALP 2017).}}


\author[1]{Marcin Bienkowski}
\author[1]{Jaros{\l}aw Byrka}
\author[2]{Marcin Mucha}
\affil[1]{Institute of Computer Science, University of Wroc{\l}aw, Poland}
\affil[2]{Institute of Informatics, University of Warsaw, Poland}
\date{}

\graphicspath{{./pict/}}

\definecolor{blueLink}{rgb}{0,0.2,0.8}
\hypersetup{colorlinks,linkcolor=blueLink,urlcolor=blueLink,citecolor=blueLink}
\newcommand{\lref}[2][]{\hyperref[#2]{#1~\ref*{#2}}}
\newcommand{\X}{\mathcal{X}}
\newcommand{\ALG}{\textsc{Alg}\xspace}
\newcommand{\DLM}{\textsc{Dlm}\xspace}
\newcommand{\OPT}{\textsc{Opt}\xspace}
\newcommand{\OPTL}{\textsc{OptL}\xspace}
\newcommand{\OPTS}{\textsc{OptS}\xspace}
\newcommand{\CALG}{\ensuremath{C_\textnormal{ALG}}}
\newcommand{\CMTLM}{\ensuremath{C_\textnormal{MTLM}}}
\newcommand{\CALGL}{\ensuremath{C_\textnormal{ALGL}}}
\newcommand{\CALGS}{\ensuremath{C_\textnormal{ALGS}}}
\newcommand{\COPT}{\ensuremath{C_\textnormal{OPT}}}
\newcommand{\COPTL}{\ensuremath{C_\textnormal{OPTL}}}
\newcommand{\COPTS}{\ensuremath{C_\textnormal{OPTS}}}
\newcommand{\COPTreq}{\ensuremath{C_\textnormal{OPT}^\textnormal{req}}}
\newcommand{\COPTmove}{\ensuremath{C_\textnormal{OPT}^\textnormal{move}}}
\newcommand{\COPTLreq}{\ensuremath{C_\textnormal{OPTL}^\textnormal{req}}}
\newcommand{\COPTLmove}{\ensuremath{C_\textnormal{OPTL}^\textnormal{move}}}
\newcommand{\COPTSreq}{\ensuremath{C_\textnormal{OPTS}^\textnormal{req}}}
\newcommand{\COPTSmove}{\ensuremath{C_\textnormal{OPTS}^\textnormal{move}}}
\newcommand{\CALGreq}{\ensuremath{C_\textnormal{ALG}^\textnormal{req}}}
\newcommand{\CALGmove}{\ensuremath{C_\textnormal{ALG}^\textnormal{move}}}
\newcommand{\PhiMTLM}{\ensuremath{\Phi_\textnormal{MTLM}}}
\newcommand{\vALG}{\ensuremath{v_\textsc{alg}}}
\newcommand{\vOPT}{\ensuremath{v_\textsc{opt}}}
\newcommand{\vMTLM}{\ensuremath{v_\textsc{mtlm}}}
\newcommand{\MTLM}{\textsc{Mtlm}\xspace}
\newcommand{\opt}{\textsc{op}}
\newcommand{\dlm}{\textsc{dlm}}
\newcommand{\alg}{\textsc{alg}}
\newcommand{\req}{\mathcal{R}}
\newcommand{\I}{\mathcal{I}}
\newcommand{\eps}{\varepsilon}

\begin{document}

\maketitle

\vspace{-1cm}

\begin{abstract}
We construct a deterministic 4-competitive algorithm for the online file
migration problem, beating the currently best 20-year-old, 4.086-competitive
\MTLM algorithm by Bartal et~al.~(SODA 1997). Like \MTLM, our algorithm also
operates in phases, but it adapts their lengths dynamically depending on the
geometry of requests seen so far. The improvement was obtained by carefully
analyzing a linear model (factor-revealing LP) of a single phase of the
algorithm. We also show that if an online algorithm operates in phases of fixed
length and the adversary is able to modify the graph between phases, then the
competitive ratio is at least~4.086.
\end{abstract}



\section{Introduction}

Consider the problem of managing a shared data item among sets of
processors. For example, in a~distributed program running in a network, nodes
want to have access to shared files, objects or databases. Such a file can be
stored in the local memory of one of the processors and when another processor
wants to access (read from or write to) this file, it has to contact 
the~processor holding the file. Such a transaction incurs a certain cost. Moreover,
access patterns to this file may change frequently and unpredictably, which
renders any static placement of the file inefficient. Hence, the goal is to
minimize the total cost of communication by moving the file in response to
such accesses, so that the requesting processors find the file ``nearby'' in the
network.

The \emph{file migration} problem serves as the theoretical underpinning of
the application scenario described above. The problem was coined by Black and
Sleator~\cite{black-cmu} and was initially called \emph{page migration},
as the original motivation concerned managing a set of memory pages in
a~multiprocessor system. There the data item was a single memory page held at
a local memory of a~single processor.

Many subsequent papers referred to this problem as \emph{file migration} and we
stick to this convention here. The file migration problem assumes the
\emph{non-uniform model}, where the shared file is much larger than a portion
accessed in a single time step. This is typical when in one step a processor
wants to read a~single unit of data from a file or a record from a database. On
the other hand, to reduce the maintenance overhead, it is assumed that the
shared file is indivisible, and can be migrated between nodes only as a whole.
This makes the file migration much more expensive than a single access to the
file. As the knowledge of future accesses is either partial or completely
non-existing, the accesses to the file can be naturally modeled as an~online
problem, where the input sequence consists of processor identifiers, which
sequentially try to access pieces of the shared file.

\subsection{The Model}

The studied network is modeled as an edge-weighted graph or, more generally,
as a metric space~$(\X,d)$ whose point set $\X$ corresponds to processors and
$d$ defines the distances between them. There is a large indivisible
\emph{file} (historically called \emph{page}) of size $D$ stored at a point
of~$\X$. An~input is a~sequence of space points $r_1, r_2, r_3, \ldots$
denoting processors requesting access to the file. This sequence is presented
in an online manner to an algorithm. More precisely, we assume that the time is
slotted into steps numbered from $1$. Let $\alg_t$ denote the position
of the file at the end of step $t$ and $\alg_0$ be the initial position
of the file. In step $t \geq 1$, the following happens:

\begin{enumerate}
\item A requesting point $r_t$ is presented to the algorithm.
\item The algorithm pays $d(\alg_{t-1},r_t)$ for serving the request.
\item The algorithm chooses a new position $\alg_t$ for the file
	(possibly $\alg_t = \alg_{t-1}$) and moves the file to
	$\alg_t$ paying $D \cdot d(\alg_{t-1}, \alg_t)$.
\end{enumerate}

After the $t$-th request, the algorithm has to make its decision (where to
migrate the file) exclusively on the basis of the sequence up to step~$t$.  To
measure the performance of an online strategy, we use the standard competitive
ratio metric~\cite{borodin-book}: an online deterministic algorithm $\ALG$ is
{\em $c$-competitive} if there exists a constant~$\gamma$, such that for any
input sequence~$\I$, it holds that $\CALG(\I) \leq c \cdot \COPT(\I) +
\gamma$, where $\CALG(\I)$ and $\COPT(\I)$ denote the costs of \ALG and \OPT
(optimal \emph{offline} algorithm) on $\I$, respectively. The minimum $c$ for
which $\ALG$ is $c$-competitive is called the~{\em competitive ratio} of $\ALG$.

\subsection{Previous Work}
\label{sec:previous_work}

The problem was stated by Black and Sleator~\cite{black-cmu}, who gave
$3$-competitive deterministic algorithms for uniform metrics and trees and
conjectured that $3$-competitive deterministic algorithms were possible for
any metric space.

Westbrook~\cite{westbrook-jcomp} constructed randomized strategies: a $3$-competitive
algorithm against adaptive-online adversaries and a $(1+\phi)$-competitive
algorithm (for $D$ tending to infinity) 
against oblivious adversaries, where $\phi \approx
1.618$ denotes the golden ratio. By the result of Ben-David
et~al.~\cite{ben-david-algorithmica} this asserted \emph{the existence} of a
deterministic algorithm with the competitive ratio at most $3 \cdot (1+\phi)
\approx 7.854$.

The first explicit deterministic construction was the $7$-competitive
algorithm \textsc{Move-To-Min} (\textsc{Mtm}) by Awerbuch et
al.~\cite{awerbuch-stoc}. \textsc{Mtm} operates in phases of length~$D$,
during which the algorithm \emph{remains at a fixed position}. In the last
step of a phase, \textsc{Mtm} migrates the file to a~point that minimizes the
sum of distances to all requests $r_1,r_2,\ldots,r_D$ presented in the phase,
i.e., to a~minimizer of the function $f_\textnormal{MTM} (x) = \sum_{i=1}^D
d(x,r_i)$.

The ratio has been subsequently improved by the algorithm
\textsc{Move-To-Local-Min} (\MTLM) by Bartal et al.~\cite{bartal-tcs}.
\MTLM works similarly to \textsc{Mtm}, but it changes the phase duration to
$c_0 \cdot D$ for a~constant $c_0$, and when computing the new position for
the file, it also takes the migration distance into account. Namely, it
chooses to migrate the file to a~point that minimizes the function
\[
	\textstyle f_\textnormal{MTLM} (x) = 
		D \cdot d(\vMTLM,x) + \frac{c_0+1}{c_0} \sum_{i=1}^{c_0 \cdot D} d(x,r_i),
\]
where $\vMTLM$ denotes the point at which \MTLM keeps its file
during the phase. The algorithm is optimized by setting $c_0 \approx 1.841$
being the only positive root of the equation $3c^3 - 8c - 4 = 0$. For such
$c$, the competitive ratio of \MTLM is $R_0 \approx 4.086$, where $R_0$ is the
largest (real) root of the equation $R^3 - 5R^2 + 3R + 3 = 0$.
Their analysis is tight. 

It is worth noting that most of the competitive ratios given above hold when
$D$ tends to infinity. In particular, for \MTLM it is assumed that $c_0 \cdot D$
is an integer and the ratio of $1+\phi$ of Westbrook's
algorithm~\cite{westbrook-jcomp} is achieved only in the limit.

Better deterministic algorithms are known only for some specific graph
topologies. There are $3$-competitive algorithms for uniform metrics and
trees~\cite{black-cmu}, and $(3+1/D)$-competitive strategies for
three-point metrics~\cite{matsubayashi-algorithmica}. Chrobak et
al.~\cite{chrobak-jalgo} showed $2+1/(2D)$-competitive strategies for
continuous trees and products of trees, e.g., for $\mathbb{R}^n$ with $\ell_1$
norm. Furthermore, they also gave a~$(1+\phi)$-competitive algorithm for
$\mathbb{R}^n$ under any norm.

A straightforward lower bound of $3$ for deterministic algorithms was given by
Black and Sleator~\cite{black-cmu} and later adapted to randomized
algorithms against adaptive-online adversaries by Westbrook~\cite{westbrook-jcomp}.
The currently best lower bound for deterministic algorithms is due to
Matsubayashi~\cite{matsubayashi-candar}, who showed a lower bound of $3+\eps$
that holds for any value of~$D$, where $\eps$ is a constant that does not
depend on $D$. This renders the file migration problem one of the few natural
problems, where a known lower bound on the competitive ratio of any
deterministic algorithm is strictly larger than the competitive ratio of a
randomized algorithm against an adaptive-online adversary. 

Finally, improved results were given for a simplified model where $D = 1$: the
competitive ratio for deterministic algorithms is then known to be between
3.164 and~3.414 in general graphs~\cite{matsubayashi-ijpam} and between 2.5
and 2.75 on the Euclidean plane~\cite{chrobak-jalgo,khorramian-algorithms}.

\subsection{Our Contribution}

We propose a $4$-competitive deterministic algorithm that dynamically decides on
the length of the phase based on the geometry of requests received in the
initial part of each phase. This improves the 20-year-old algorithm \MTLM by
Bartal et al.~\cite{bartal-tcs}.

The improvement was obtained by carefully analyzing a linear model (factor
revealing LP) of~a~single phase of the algorithm. It allowed us to identify some
key tight examples for the previous analysis, suggested a nontrivial
construction of the new algorithm, and facilitated a~systematic optimization of
algorithm's parameters.

More precisely, for a given algorithm \ALG (from a relatively broad class),
we create an LP, whose objective function is to maximize the 
competitive ratio of \ALG. The variables of this LP describe an input for \ALG:
they give a succinct description of a metric space along with the placement of the 
requests. We note that the exact modeling of the cost of \ALG and \OPT is not 
possible by a~finite number of linear constraints. Therefore, the LP only upper-bounds
the cost of \ALG and lower-bounds the cost of \OPT. This way,
the optimal value computed by the LP is an upper bound on the competitive ratio of 
\ALG. We discuss the details of the LP approach in \lref[Section]{sec:lp}.

The way the algorithm was obtained is perhaps unintuitive. Nevertheless, the
final algorithm is an elegant construction involving only essentially integral
constants. By studying the dual solution, we managed to extract a compact,
human-readable, combinatorial upper bound based on path-packing arguments and
to obtain the following result proven in \lref[Section]{sec:algorithm}.

\begin{theorem} 
\label{thm:4comp}
There exists a deterministic 4-competitive algorithm for the file migration problem.
\end{theorem}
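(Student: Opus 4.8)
The plan is to follow the same phase-based template as \MTLM but to let the phase length be chosen \emph{adaptively}. Concretely, I would design an algorithm \DLM that, at the start of each phase, keeps its file at a fixed point $v$ and begins serving requests $r_1, r_2, \ldots$; after each new request it evaluates a stopping condition that depends only on the geometry of the requests seen so far in the phase (for instance, on how the partial sums $\sum_{i\le k} d(v, r_i)$ and the best ``local minimum'' point for the requests so far compare to $k \cdot D$). When the condition fires at step $k$, the phase ends: \DLM migrates the file to the minimizer $x$ of a weighted objective of the form $D \cdot d(v,x) + \lambda \sum_{i=1}^{k} d(x, r_i)$ for a suitable constant $\lambda$, pays $D \cdot d(v,x)$, and starts a new phase at $x$. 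The integral constants promised in the introduction (the ``$4$'') should emerge from this objective and the threshold. The subtle point is that unlike \MTLM, different phases can have different lengths, so the choice of stopping rule must simultaneously (i) guarantee the phase is not too short — otherwise the migration cost is not amortized against enough request cost — and (ii) guarantee it is not too long — otherwise \OPT, sitting still at a good point, accrues much less than \ALG.

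The second and main step is the single-phase analysis via the factor-revealing LP described in \lref[Section]{sec:lp}. I would fix the candidate \DLM (with its stopping rule and $\lambda$) and write the maximization LP whose value upper-bounds the competitive ratio: the variables encode a metric on a small number of points (the algorithm's positions at the two ends of the phase, \OPT's position, and the requests — or rather aggregated ``request mass'' in a few directions), the constraints encode triangle inequalities, the definition of the phase-ending migration target as a minimizer, and the stopping condition; the objective is (\ALG's phase cost) $- R \cdot (\OPT's phase cost)$, and one shows it is $\le 0$ for $R = 4$. Rather than solving the LP numerically, the write-up should exhibit an explicit dual solution — a nonnegative combination of the constraints that certifies the bound — which by LP duality is exactly a ``path-packing'' inequality: one bounds $\CALG$ on the phase by a sum of lengths of certain paths through the request points, and bounds each such path below using the triangle inequality and \OPT's trajectory. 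Summing the per-phase inequalities over all phases and telescoping the \OPT potential (the distance between \ALG's and \OPT's file positions at phase boundaries) yields $\CALG(\I) \le 4 \cdot \COPT(\I) + \gamma$ with $\gamma = O(D \cdot \operatorname{diam})$ depending only on the initial configuration.

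A few technical wrinkles need care. First, the LP and hence the whole argument is stated in the limit $D \to \infty$ (with quantities like $\lambda \cdot k$ treated as reals), so I would carry a $\pm O(1/D)$ slack through the estimates and absorb it into the additive constant, exactly as \cite{bartal-tcs} do. Second, \OPT may move \emph{during} a phase; the standard fix is to compare against an offline algorithm restricted to move only at phase boundaries (an \OPTL/\OPTS-type lower bound as the macro names \COPTL, \COPTS, \OPTL, \OPTS suggest) and to show that restriction costs only a constant factor that is already folded into the LP — or, better, to let the LP range over \OPT's intra-phase moves directly, which keeps the bound tight. Third, one must confirm the stopping rule always eventually fires, so phases are finite and the global sum is well defined; this follows because the partial request cost grows without bound while the competing quantity $k \cdot D$ is fixed within an ``attempted'' phase.

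The hard part will be step two — specifically, finding the right stopping rule together with the right $\lambda$ so that the factor-revealing LP has value exactly $4$, and then reverse-engineering a clean dual. Getting a \emph{worse} constant out of some adaptive rule is not hard; hitting $4$ requires the threshold and the migration objective to be ``balanced'' so that the tight LP instances for the too-short-phase and too-long-phase regimes coincide in value, and then recognizing the optimal dual as an interpretable path-packing bound. This is precisely where the LP-guided search in the paper does the work that a purely hand-crafted analysis would struggle to replicate.
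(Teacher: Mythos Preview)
Your outline tracks the paper's methodology --- adaptive phase length, factor-revealing LP, dual read off as path-packing inequalities, potential $3\cdot[\alg_t,\opt_t]$ telescoped over phases --- but it is a plan rather than a proof: you explicitly defer the only substantive step (``finding the right stopping rule together with the right $\lambda$'') to the LP search.

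Where you do commit to specifics, they diverge from the paper in ways that matter. You propose a stopping condition checked after \emph{each} request and a \emph{single} migration objective $D\cdot d(v,x)+\lambda\sum_i d(x,r_i)$. The paper's \DLM has a different shape: it checks one condition exactly once, at step $1.75\,D$, and makes a binary choice between a short phase of length $1.75\,D$ and a long one of length $2.25\,D$; the requests are partitioned into blocks $\req_1,\req_2(,\req_3)$ of sizes $D,\,0.75\,D,\,0.5\,D$, and the two branches use \emph{different} migration objectives $g(v)=[\alg_t,v]+2[v,\req_1]+[v,\req_2]$ and $h(v)=[\alg_t,v]+[v,\req_1]+1.25[v,\req_2]+0.75[v,\req_3]$ with block-dependent weights. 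Crucially, the analysis exploits the threshold in \emph{both} directions: the short-phase bound uses $g(v_g)\le 1.5\,[\alg_t,\req_2]$ directly, while the long-phase bound uses its negation $g(\opt^0)>1.5\,[\alg_t,\req_2]$ to charge $0.75\,[\alg_t,\req_2]$ against \OPT. A single-$\lambda$ \MTLM-style objective with a monotone trigger does not furnish this two-sided leverage, and there is no indication your scheme can reach $4$.

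A smaller gap: you suggest handling intra-phase \OPT movement by restricting \OPT to phase boundaries. The paper instead applies \lref[Lemma]{lem:opt_lower_bound} separately to each block $\req_i$ and tracks \OPT at the \emph{block} boundaries $\opt^0,\opt^1,\opt^2(,\opt^3)$; the path-packing certificate routes through all of these intermediate points. Coarsening to just the two phase endpoints loses the slack needed to close at~$4$.
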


As it was in the case for \MTLM, we assume that $D$ is chosen so that any
phase consists of an~integral number of steps: for our purposes, it is sufficient that 
$D$ is divisible by $4$.

We also show that an improvement of \MTLM would not be possible by just
selecting different parameters for an algorithm operating in phases of fixed
length. Our construction, given in \lref[Section]{sec:lower}, shows that
an~analysis that treats each phase separately (e.g., the one employed for
\MTLM~\cite{bartal-tcs}) cannot give better bounds on the competitive ratio
than $4.086$. (A weaker lower bound of $3.847$ for algorithms that use fixed
phase length was given by Bartal et al.~\cite{bartal-tcs}.)

\begin{theorem} 
\label{thm:lower_bound}
Fix any algorithm \ALG that operates in phases of fixed length. Assume that
between the phases, the adversary can arbitrarily modify the graph while keeping
the distance between the files of \ALG and \OPT unchanged. Then, the
competitive ratio of \ALG is at least~$R_0$ (for $D$ tending to infinity),
where $R_0 \approx 4.086$ is the competitive ratio of algorithm~\MTLM.
\end{theorem}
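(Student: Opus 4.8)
\medskip\noindent\emph{Proof plan.}
The plan is an amortized, phase-by-phase argument. Fix \ALG and its constant phase length $\ell$, write $c=\ell/D$, and use the potential $\Phi=D\cdot\delta$, where $\delta$ denotes the current distance between \ALG's file and \OPT's file --- the amount \ALG would have to spend to make its file coincide with \OPT's. Observe that $\Phi\ge 0$, that $\Phi$ is exactly the quantity the graph-modification rule keeps fixed between phases, and that the value of $\Phi$ at the start of the instance is a constant we may fix. The whole instance is a long chain of single-phase gadgets: for the current phase the adversary installs a gadget --- a small graph holding the current file positions at distance $\delta$, a request sequence of length $\ell$, and an offline trajectory for \OPT --- lets \ALG run through the phase (during which a phase-based algorithm keeps its file fixed) and migrate once at the end, and then modifies the graph to install the next gadget, which is legal precisely because the modification leaves $\Phi$ unchanged.

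The core is a \emph{single-phase lemma}: for every incoming value $\delta\ge 0$ there is such a gadget so that, writing $A$ and $P$ for the costs of \ALG and of the chosen offline strategy during the phase and $\Delta\Phi$ for the change of $\Phi$ over it,
\[
  A \;\ge\; R_0\cdot P \;+\; \Delta\Phi
\]
holds \emph{no matter which point \ALG migrates to} at the end of the phase (there are finitely many candidates, the gadget graph being finite). Granting the lemma, summing over $n$ phases makes $\Phi$ telescope; as $\Phi\ge 0$ and its initial value is a fixed constant, $\CALG\ge R_0\cdot(\sum P)-O(1)\ge R_0\cdot\COPT-O(1)$, and since every phase costs the offline strategy a positive amount, $\COPT\to\infty$ as $n\to\infty$, so $\CALG/\COPT\to R_0$ and the competitive ratio of \ALG is at least $R_0$. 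The $O(1)$ error and the rounding of $cD$ to an integer are what restrict the statement to the regime $D\to\infty$.

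Proving the single-phase lemma is the real work, and a naive attempt breaks here. If the phase merely presents $\ell$ requests at one point $p$ with \OPT sitting at $p$, then a reasonable \ALG (for instance \MTLM) migrates straight to $p$, simultaneously driving $\Delta\Phi$ down by the full $-D\delta$ and serving every request cheaply, and the displayed inequality fails as soon as $c<R_0-1$; the same collapse happens for any single-request-point ``triangle'' gadget, and it is fatal because once \ALG has caught up ($\delta=0$) the adversary has no leverage left. The gadget must therefore be built so that \emph{no single migration target is simultaneously close to all of the phase's requests and to \OPT's file}: concretely, one uses at least two request points --- equivalently, a configuration straddling the phase boundary that \ALG's rigid schedule hands to the adversary --- forcing \ALG to pay either for the requests it was stranded away from or for chasing \OPT. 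The internal distances of the gadget are tuned as functions of $c$ so that, after checking the inequality at each of the finitely many migration targets, the worst case is met with equality; the resulting relations are governed by the cubic $R^3-5R^2+3R+3=0$ that defines $R_0$. At $c$ equal to the \MTLM constant $c_0$ the bound so obtained must be exactly $R_0$ --- it cannot be larger, since \MTLM with $c_0$ already attains that ratio --- and one then checks, by a short analysis of the induced function of $c$, that its minimum over $c>0$ is exactly $R_0$, attained at $c=c_0$.

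The hard part is thus the gadget design: producing one family of configurations, parametrized by the phase length, that (i) denies \ALG the escape route of catching up to \OPT while serving requests cheaply, and (ii), once its free distances are optimized, lands \emph{exactly} at $R_0$ rather than at something smaller --- obtaining (ii) is where the cubic for $R_0$ must be worked through with care, since the entire content of the theorem is that $R_0$, and not a smaller constant, is the true barrier for fixed-phase-length algorithms. Everything else is routine bookkeeping: checking the triangle inequality inside each gadget, the legality of the offline trajectory, the fact that the graph modification between consecutive gadgets preserves $\Phi$, and the $\pm O(1)$ and integrality losses that the $D\to\infty$ limit absorbs.
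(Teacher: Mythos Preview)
There is a genuine gap: your single-phase lemma with potential $\Phi=D\delta$ cannot hold at $\delta=0$. Suppose \ALG and \OPT start together at $v$, you present any gadget, and \ALG elects to \emph{stay} at $v$ (a legal move your lemma must cover). Then $A$ is the serving cost from $v$, and you need some offline trajectory starting at $v$ with $A\ge R_0 P + D\delta'$. If \OPT also stays, $P=A$, $\delta'=0$, and the inequality reads $A\ge R_0A$, impossible unless $A=0$ (no progress, contradicting your ``every phase costs the offline strategy a positive amount''). If instead \OPT moves to some $w$ at distance $f$, then $P\ge Df$ and $\delta'=f$; even in the extremal case where all requests sit at $w$ (so $A=cDf$), you would need $c\ge R_0+1\approx 5.09$, far above the interesting regime around $c_0\approx1.84$. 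Intermediate or more complicated trajectories do not help: the term $R_0P+D\delta'$ always exceeds $A$ for such $c$. So the amortized inequality is simply unattainable at $\delta=0$, and the telescoping argument never gets off the ground. Your proposal also never actually specifies the gadget; the sentence ``the internal distances are tuned as functions of $c$ so that the worst case is met with equality'' is exactly the hard part of the theorem, and asserting it is not proving it.

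The paper accepts that no single phase at $\delta=0$ can enforce ratio $R_0$ and instead argues over multi-phase \emph{epochs}. A \emph{linear} play at $\delta=0$ deliberately incurs negative gain $A-R_0P\approx -D/(1-2\alpha)$ while pushing $\delta$ up to $1$; this deficit is then recovered exactly by a geometric cascade of \emph{bipartite} plays, the $\ell$-th contributing gain $(2\alpha)^\ell D$ while shrinking $\delta$ by a factor $2\alpha$; a \emph{finishing} play resets $\delta$ to $0$ and the epoch repeats. The bookkeeping is organized as a finite state machine over the distances $\{0,1,2\alpha,(2\alpha)^2,\ldots\}$, and one shows that every closed walk returning to the start state has nonnegative total gain (with a separate case for algorithms that loop forever at some intermediate state). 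The paper also handles short phase lengths $c\le c_T\approx1.35$ separately, via the existing lower bound of Bartal et al.; your outline has no such case split. In short, the construction genuinely requires several distinct gadgets and an epoch-level (not phase-level) amortization, and the identity you expect to arise from a single tuned gadget does not exist.
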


We note that the additional power of graph modification given to the adversary 
in the theorem above would not change the existing analyses of phase-based 
algorithms~\cite{bartal-tcs,awerbuch-stoc,westbrook-jcomp}. All these proofs
employ potential function that depends only on the distance between \ALG and 
\OPT, and analyze each phase of an algorithm separately.

\subsection{Other Related Work}

The file migration problem has been generalized in a few directions. When we
lift the restriction that the file can only be migrated and not copied, the
resulting problem is called
\emph{file allocation}~\cite{bartal-jcss,awerbuch-stoc,lund-jcomp}. 
It makes sense especially when we differentiate read and write requests to the
file; for the former, we need to contact only one replica of the file; for the
latter, all copies need to be updated. The attainable competitive ratios become
then worse: the best deterministic algorithm is 
$O(\log n)$-competitive~\cite{awerbuch-stoc}; the lower bound of $\Omega(\log n)$
holds even for randomized algorithms and follows by a reduction from the
online Steiner tree problem~\cite{bartal-jcss,imase-jdmath}.

The file migration problem has been also extended to accommodate memory
capacity constraints at nodes (when more than one file is
used)~\cite{albers-wads,awerbuch-focs,awerbuch-jalgo,bartal-phd},
dynamically changing networks~\cite{awerbuch-jalgo,bienkowski-jda}, and
different objective functions (e.g., minimizing
congestion)~\cite{maggs-focs,meyer-auf-der-heide-esa}. For a~more
systematic treatment of the file migration and related problems, see
surveys~\cite{bartal-survey,bienkowski-survey}. For more applied
approaches, see the survey~\cite{gavish-cacm} and the references
therein.


\section{4-Competitive Algorithm Dynamic-Local-Min}
\label{sec:algorithm}

We start with an insight concerning phase-based algorithms, i.e., ones that
serve requests within a~phase and, only at its end, move the file towards a
(weighted) center of phase requests. Intuitively, it makes sense to measure the
level of request concentration: the distance of the requests from their center
compared to the distance from the current position of an~algorithm to this
center. When a~phase-based algorithm observes that (from some time) requests are
concentrated around a~certain point, it makes sense to shorten the phase and
quickly move to the center of the requests. If, on the other hand, requests are
scattered and there is no single point close to the observed requests, it
appears reasonable to wait longer before moving the file. The theoretical
underpinning behind this intuition stems from analyzing hard instances for the
algorithm \MTLM; we provide a more detailed discussion of these instances in
\lref[Section]{sec:lp_conclusions}

Turning the above intuition into an effective phase extension rule is not
trivial. We present an~algorithm based on a rule that we have extracted from
an optimization process using a natural linear model of the amortized
phase-based analysis. This linear model is quite complex and we present it in
\lref[Section]{sec:lp}. It can be seen as an alternative (computer-based)
proof for the performance guarantee of our algorithm. Such proof technique
might be interesting on its own and useful for analyzing other online games
played on metric spaces.

\subsection{Notation}
\label{sec:notation}

For succinctness, we introduce the following notion. For any two points $v_1,
v_2 \in \X$, let $[v_1,v_2] = D \cdot d(v_1,v_2)$. We extend this notation to
sequences of points, i.e., $[v_1,v_2,\ldots,v_j] = [v_1,v_2] + [v_2,v_3] +
\ldots + [v_{j-1},v_{j}]$. Moreover, if $v \in \X$ is a~point and $S \subseteq
\X$ is a multi-set of points, then
\[
\textstyle	[v,S] = [S,v] = D \cdot \frac{1}{|S|} \sum_{x \in S} d(v,x),
\]
i.e., $[v,S]$ is the average distance from $v$ to a point of $S$ times $D$. We
extend the sequence notation introduced above to sequences of points and
multi-sets of points, e.g., $[v,S,u,T] = [v,S]+[S,u]+[u,T]$. The symbol $[S,T]$
is not defined for multi-sets $S$, $T$; we use this notation only for
sequences that do not contain two consecutive multi-sets.

Observe that the sequence notation allows for easy expressing of the triangle
inequality: $[v_1,v_2] \leq [v_1,v_3,v_2]$; we will extensively use this
property. Note that the following ``multi-set'' version of the~triangle
inequality also holds: $[v_1,v_2] \leq [v_1,S,v_2]$.

\subsection{Algorithm Definition}

We propose a new phase-based algorithm, called Dynamic-Local-Min (\DLM),  that
dynamically decides on the length of the current phase.  \DLM operates in
phases, but it chooses their lengths depending on the geometry  of requests
seen in the initial part of the phase. Roughly speaking,  when it recognizes
that the currently seen requests are ``rather concentrated'', it ends the
phase after $1.75\,D$ steps, and otherwise it ends it only after $2.25\,D$
steps.

For any step $t$, we denote the position of \DLM's file at the end of step $t$
by $\dlm_t$ and that of \OPT by $\opt_t$. We identify the requests with the
points where they are issued.

Assume a phase starts in step $t+1$; that is, $\dlm_t$ is the position of \DLM
at the very beginning of a phase. Within the phase, \DLM waits $1.75\,D$ steps
and at step $t+1.75\,D$, it finds a point $v_g \in \X$ that minimizes the function
\[
	g(v) = [\dlm_t,v,\req_1,v,\req_2] = [\dlm_t,v] + 2 \cdot [v,\req_1] + [v,\req_2],
\]
where $\req_1$ is the multi-set of the requests from steps $t+1,\ldots,t+D$ and 
$\req_2$ is the multi-set of the subsequent requests from steps $t+D+1,\ldots,t+1.75\,D$.

If $g(v_g) \leq 1.5 \cdot [\dlm_t,\req_2]$, the algorithm moves its file to
$v_g$, and ends the current phase.  Intuitively, this condition corresponds to
detecting that there exists a point that is substantially closer to the first
$1.75\,D$ requests of the phase than the current position. If indeed such point
exists, then migrating the file to this point is a good strategy: either \OPT
follows similar strategy and we end up with our file closer to the file of \OPT
or \OPT deviates from such strategy and its cost is high.

On the other hand, if there is no such good point, then also the optimal
solution is experiencing some request related costs. Then, we may afford to wait
a little longer and meanwhile get a more accurate estimation of the possible
location of the file of \OPT. That is, if $g(v_g) > 1.5 \cdot [\dlm_t,\req_2]$,
\DLM waits the next $0.5\,D$ steps and (in step $t+2.25\,D$) it moves its file
to the point $v_h \in \X$ being a~minimizer of the function
\[
	h(v) = [\dlm_t,v] + [v,\req_1] + 1.25 \cdot [v,\req_2] + 0.75 \cdot [v,\req_3].
\]
$\req_3$ is the multi-set of the last $0.5\,D$ requests from the prolonged phase 
(from steps $t+1.75\,D + 1, \ldots, t+2.25\,D$). Also in this case, the next 
phase starts right after the file movement. 

Note that the \emph{short phase} consists of $D$ requests denoted $\req_1$
followed by $0.75\,D$ requests denoted~$\req_2$, while the \emph{long phase}
consists additionally of $0.5\,D$ requests denoted $\req_3$. We say that
the short phase consists of \emph{two parts}, $\req_1$ and $\req_2$, and the
long phase consists of \emph{three parts}, $\req_1$, $\req_2$ and~$\req_3$.

\subsection{DLM Analysis: Preliminaries}

We start by estimating the cost of \OPT on a given subsequence of requests, 
using its initial and final position. The following bound is an extension
of the bound given implicitly in~\cite{bartal-tcs}.

\begin{lemma}
\label{lem:opt_lower_bound}
Let $\req$ be a subsequence of $\ell \leq 2D$ consecutive requests from the 
input issued at steps $t+1,t+2,\ldots,t+\ell$. Then,
$2 \cdot \COPT(\req) 
\geq (\ell / D) \cdot [\opt_t,\req,\opt_{t+\ell}] 
  + (2-\ell/D) \cdot \sum_{i=t+1}^{t+\ell} [\opt_{i-1}, \opt_i]
\geq (\ell / D) \cdot [\opt_t,\req,\opt_{t+\ell}] 
  + (2-\ell/D) \cdot [\opt_t,\opt_{t+\ell}]$.
\end{lemma}

\begin{proof}
For simplicity of notation, we assume that $t = 0$. In these terms,
$\req$~corresponds to requests $r_1,r_2,\ldots,r_{\ell}$ issued at the
consecutive steps. For any point $v$, let $C(v)$ denote the cost of serving all
these requests by an algorithm that keeps the file always at $v$. By the
triangle inequality, for each $y \in \{0, \ell \}$, $C(\opt_y) =
\sum_{i=1}^{\ell} d(\opt_y, r_i) \leq \sum_{i=1}^{\ell} d(\opt_y, \opt_{i-1}) +
\sum_{i=1}^{\ell} d(\opt_{i-1}, r_i)$, and thus
\begin{align*}
C(\opt_0) + C(\opt_\ell) 
	\leq &\;
	 \sum_{i=1}^{\ell} \Big( d(\opt_0, \opt_{i-1}) + 
		d(\opt_{\ell}, \opt_{i-1}) \Big) 
		+ 2 \sum_{i=1}^{\ell} d(\opt_{i-1}, r_i) \\
	\leq &\;
	 \sum_{i=1}^{\ell} \sum_{j=1}^{\ell} d(\opt_{j-1},\opt_j) +
		2 \sum_{i=1}^{\ell} d(\opt_{i-1}, r_i) \\
	= &\; \sum_{i=1}^{\ell} \ell \cdot d(\opt_{i-1},\opt_i) +
		2 \sum_{i=1}^{\ell} d(\opt_{i-1}, r_i).
\intertext{On the other hand, \OPT pays 
$d(\opt_{i-1}, r_i) + D \cdot d(\opt_{i-1},\opt_i)$ in step $i$. Hence,}
2 \cdot \COPT(\req) 
	= &\; \sum_{i=1}^{\ell} 2 D \cdot d(\opt_{i-1},\opt_i)
		+ 2 \sum_{i=1}^{\ell} d(\opt_{i-1}, r_i).
\intertext{Therefore,}
2 \cdot \COPT(\req) - C(\opt_0) - C(\opt_\ell) 
	\geq &\; \sum_{i=1}^{\ell} (2 D - \ell) \cdot d(\opt_{i-1},\opt_i).
\end{align*}
Now, using that $C(v) = (\ell/D) \cdot [v,\req]$ and $d(v,v') = [v,v']/D$ 
for all points $v, v'$ immediately
yields $2 \cdot \COPT(\req) - (\ell/D) \cdot [\opt_0,\req,\opt_\ell] \geq (2-\ell/D) \cdot 
\sum_{i=1}^{\ell} [\opt_{i-1},\opt_i]$, which concludes the proof.
\end{proof}


We define a potential function at (the end of) step $t$ as $\Phi_t = 3 \cdot
[\dlm_t, \opt_t]$. In the next two subsections, we show that in any (short or long) phase
consisting of steps $t+1,t+2,\ldots,t+z$, during which requests $\req$ are
given, it holds that
\begin{equation}
\label{eq:4comp}
	\CALG(\req) + \Phi_{t+z} \leq 4 \cdot \COPT(\req) + \Phi_t.
\end{equation}
Finally, we show that \lref[Theorem]{thm:4comp} follows by summing the above bound
over all phases of the input.

\subsection{DLM Analysis: Proof for a Short Phase}

We consider any short phase $\req$ consisting of part $\req_1$, spanning steps 
$t+1,\ldots,t+D$, and part $\req_2$, spanning steps $t+D+1,\ldots,t+1.75\,D$.
For succinctness, we define $\opt^0 = \opt_t$, $\opt^1 = \opt_{t+D}$ and 
$\opt^2 = \opt_{t+1.75\,
D}$. By \lref[Lemma]{lem:opt_lower_bound} applied to $\req_1$ and $\req_2$,
\begin{align}
\nonumber
	 \Phi_t + 4 \cdot \COPT(\req)\;
		& =\; 3 \cdot [\dlm_t,\opt^0] + 4 \cdot \COPT(\req_1) + 4 \cdot \COPT(\req_2) \\
\label{eq:short_phase_opt}
		& \geq\; 3 \cdot [\dlm_t,\opt^0] + 2 \cdot [\opt^0,\req_1,\opt^1] + 2 \cdot [\opt^0,\opt^1]  \\
\nonumber
		& \quad\quad + 1.5 \cdot [\opt^1,\req_2,\opt^2] + 2.5 \cdot [\opt^1,\opt^2].
\end{align}
We treat the amount \eqref{eq:short_phase_opt} as our budget. This is illustrated below; the coefficients are written as edge weights.
\begin{center}
\includegraphics[width=0.8\textwidth]{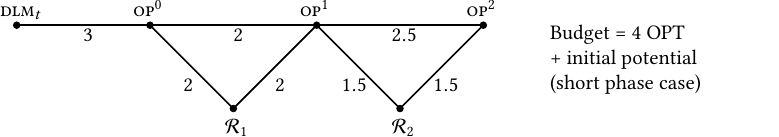}
\end{center}
Now, we bound $\CALG(\req) + \Phi_{t+1.75\,D}$ using the definition of \ALG and the triangle inequality.
\begin{align}
\nonumber
\CALG (\req) \,+\, & \Phi_{t+1.75\,D} \; \\
\nonumber
  =\; & \CALG(\req_1) + \CALG(\req_2) + 3 \cdot [v_g,\opt^2] \\
\nonumber
  \leq\; & [\dlm_t,\req_1] + 0.75 \cdot [\dlm_t,\req_2] + [\dlm_t,v_g] + 3 \cdot [v_g,\opt^2] \\
\nonumber
  \leq\; & [\dlm_t,\req_1] + 0.75 \cdot [\dlm_t,\req_2] + [\dlm_t,v_g] + 2 \cdot [v_g,\req_1,\opt^2] + 
		[v_g,\req_2,\opt^2] \\
\nonumber
  =\; & [\dlm_t,\req_1] + 0.75 \cdot [\dlm_t,\req_2] + 
		2 \cdot [\opt^2,\req_1] + [\opt^2,\req_2] \\
\nonumber
	& \quad\quad +	[\dlm_t,v_g] + 2 \cdot [v_g,\req_1] + [v_g,\req_2] \\
\label{eq:short_phase_alg1}
	=\; & [\dlm_t,\req_1] + 0.75 \cdot [\dlm_t,\req_2] + 
		2 \cdot [\opt^2,\req_1] + [\opt^2,\req_2] + g(v_g).
\end{align}
The first four summands of \eqref{eq:short_phase_alg1} can be bounded as 
\begin{align}
\nonumber
	[\dlm_t, & \req_1] + 0.75 \cdot [\dlm_t,\req_2] + 2 \cdot [\opt^2,\req_1] + [\opt^2,\req_2] \\
\label{eq:short_phase_alg2}
	& \leq [\dlm_t,\opt^0,\req_1] + 0.75 \cdot [\dlm_t,\opt^0,\opt^1,\req_2] + 2 \cdot [\opt^2,\opt^1,\req_1] + [\opt^2,\req_2],
\end{align}
and their total weights in the final expression are depicted below.
\begin{center}
\includegraphics[width=0.8\textwidth]{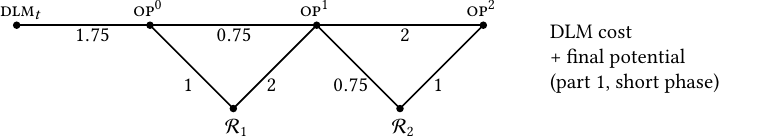}
\end{center}
For bounding the last summand of \eqref{eq:short_phase_alg1}, $g(v_g)$, we use the fact
that $v_g$ is a minimizer of the function~$g$ (and hence $g(v_g) \leq g(\opt^0)$),
and the property of the short phase ($g(v_g) \leq 1.5 \cdot [\dlm_t,\req_2]$).
Consequently, $g(v_g)$ is at most the average of $g(\opt^0)$ and 
$1.5 \cdot [\dlm_t,\req_2]$, i.e.,
\begin{align}
\nonumber
	g(v_g) \;
	& \leq\; 0.5 \cdot g(\opt^0) + 0.75 \cdot [\dlm_t,\req_2] \\
\nonumber
	& \leq\; 0.5 \cdot [\dlm_t,\opt^0,\req_1,\opt^0,\req_2] + 0.75 \cdot [\dlm_t,\req_2] \\
\label{eq:short_phase_alg3}
	& \leq\; 0.5 \cdot [\dlm_t,\opt^0,\req_1,\opt^0,\opt^1,\opt^2,\req_2] + 0.75 \cdot [\dlm_t,\opt^0,\opt^1,\req_2].
\end{align}
\begin{center}
\includegraphics[width=0.8\textwidth]{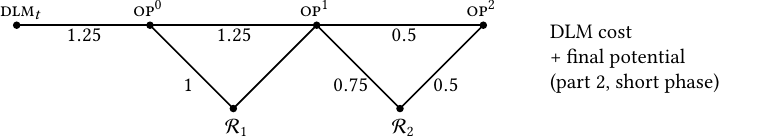}
\end{center}
By combining \eqref{eq:short_phase_alg1}, \eqref{eq:short_phase_alg2} and
\eqref{eq:short_phase_alg3} (or simply adding the edge coefficients on the last
two figures), we observe that the budget (\eqref{eq:short_phase_opt}, i.e., the
edge coefficients on the first figure) is not exceeded. This shows that
\eqref{eq:4comp} holds for any short phase.

\subsection{DLM Analysis: Proof for a Long Phase.}

We consider any long phase $\req$ consisting of part $\req_1$, spanning steps
$t+1,\ldots,t+D$; part $\req_2$, spanning steps $t+D+1,\ldots,t+1.75\cdot D$;
and part $\req_3$, spanning steps $t+1.75\cdot D + 1,\ldots,t+2.25 \cdot D$.
Similarly to the proof for a short phase, we define $\opt^0 = \opt_t$, $\opt^1
= \opt_{t+D}$, $\opt^2 = \opt_{t+1.75\,D}$, and $\opt^3 = \opt_{t+2.25\,D}$.

By \lref[Lemma]{lem:opt_lower_bound}, we obtain a bound very similar to that
for a short phase; again, we treat it as a~budget and depict its coefficients
as edge weights.
\begin{align}
\nonumber
	\Phi_t + 4 \cdot \COPT(\req) 
		=\; & 3 \cdot [\dlm_t,\opt^0] + 4 \cdot \COPT(\req_1) + 4 \cdot \COPT(\req_2) + 4 \cdot \COPT(\req_3) \\
\label{eq:long_phase_opt}
		\geq\; & 3 \cdot [\dlm_t,\opt^0] + 2 \cdot [\opt^0,\req_1,\opt^1] + 2 \cdot [\opt^0,\opt^1] 
			+ 1.5 \cdot [\opt^1,\req_2,\opt^2] \\
\nonumber
			& \quad\quad + 2.5 \cdot [\opt^1,\opt^2] + [\opt^2,\req_3,\opt^2] + 3 \cdot [\opt^2,\opt^3].
\end{align}
\begin{center}
\includegraphics[width=0.99\textwidth]{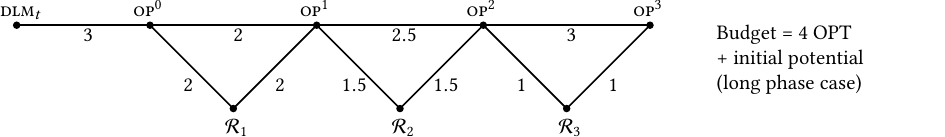}
\end{center}
Now, we bound $\CALG(R) + \Phi_{t+2.25\,D}$, using the definition of \ALG and
the triangle inequality.
\begin{align}
\nonumber
\CALG(\req) \,+\, & \Phi_{t+2.25\,D} \; \\
\nonumber
	=\; & \CALG(\req_1) + \CALG(\req_2) + \CALG(\req_3) + 3 \cdot [v_h,\opt^3] \\
\nonumber
	=\; & [\dlm_t,\req_1] + 0.75 \cdot [\dlm_t,\req_2] + 0.5 \cdot [\dlm_t,\req_3] + 
		[\dlm_t,v_h] + 3 \cdot [v_h,\opt^3] \\
\nonumber
	\leq\; & [\dlm_t,\req_1] + 0.75 \cdot [\dlm_t,\req_2] + 0.5 \cdot [\dlm_t,\req_3] + 
		[\dlm_t,v_h] \\
\nonumber
	& \quad\quad + [v_h,\req_1,\opt^3] + 1.25 \cdot [v_h,\req_2,\opt^3] + 
			0.75 \cdot [v_h,\req_3,\opt^3] \\
\label{eq:long_phase_alg1}
	=\; & [\dlm_t,\req_1] + 0.75 \cdot [\dlm_t,\req_2] + 0.5 \cdot [\dlm_t,\req_3] \\
\nonumber
	&\quad\quad + [\opt^3,\req_1] + 1.25 \cdot [\opt^3,\req_2] + 
			0.75 \cdot [\opt^3,\req_3] + h(v_h).
\end{align}
As \DLM has not migrated the file after the first two parts, $g(v) \geq 1.5
\cdot [\dlm_t,\req_2]$ for any $v \in \X$. Therefore $0.75 \cdot [\dlm_t,\req_2]
\leq 0.5 \cdot g(\opt^0) = 0.5 \cdot [\dlm_t,\opt^0,\req_1,\opt^0,\req_2] \leq
0.5 \cdot [\dlm_t,\opt^0,\req_1,\opt^0,\opt^1,\req_2]$. Using this and the
triangle inequality, the first three summands of \eqref{eq:long_phase_alg1}
can be bounded and depicted as follows: 
\begin{align}
\nonumber
[\dlm_t,&\req_1] + 0.75 \cdot [\dlm_t,\req_2] + 0.5 \cdot [\dlm_t,\req_3] \\ 
\label{eq:long_phase_alg2}
	& \leq [\dlm_t,\opt^0,\req_1] + 0.5 \cdot [\dlm_t,\opt^0,\req_1,\opt^0,\opt^1,\req_2] 
	+ 0.5 \cdot [\dlm_t,\opt^0,\opt^1,\opt^2,\req_3].
\end{align}
\begin{center}
\includegraphics[width=0.99\textwidth]{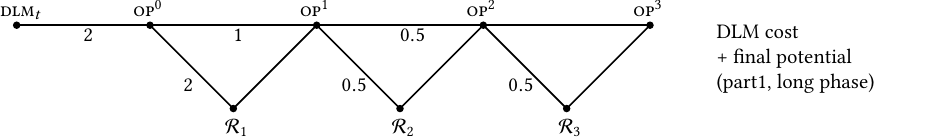}
\end{center}
The next three summands of \eqref{eq:long_phase_alg1} can be also bounded appropriately:
\begin{align}
\label{eq:long_phase_alg3}
\nonumber
[\opt^3, & \req_1] + 1.25 \cdot [\opt^3,\req_2] + 0.75 \cdot [\opt^3,\req_3]  \\
	& \leq [\opt^3,\opt^2,\opt^1,\req_1] + 1.25 \cdot [\opt^3,\opt^2,\req_2] 
		+ 0.75 \cdot [\opt^3,\req_3].
\end{align}
\begin{center}
\includegraphics[width=0.99\textwidth]{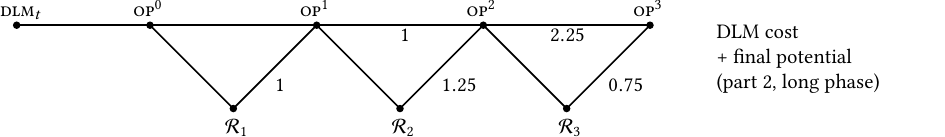}
\end{center}
Lastly, for bounding $h(v_h)$, we use the fact that $v_h$ is a minimizer of $h$, and
hence
\begin{align}
\nonumber	
	h(v_h) 
	\leq &\; h(\opt^1) \\
\nonumber
	= &\; [\opt^1,\dlm_t] + [\opt^1,\req_1] + 1.25 \cdot [\opt^1,\req_2] 
		+ 0.75 \cdot [\opt^1,\req_3] \\
\label{eq:long_phase_alg4}
	\leq &\; [\opt^1,\opt^0,\dlm_t] + [\opt^1,\req_1] + [\opt^1,\req_2] + 0.25 \cdot [\opt^1,\opt^2,\req_2] \\ 
\nonumber
	&\; \quad\quad + 0.5 \cdot [\opt^1,\opt^2,\req_3] + 0.25 \cdot [\opt^1,\opt^2,\opt^3,\req_3].
\end{align}
Note that in \eqref{eq:long_phase_alg4} we split some of the paths and choose
the longer ones, so that the budgets on edges are not violated. Bound
\eqref{eq:long_phase_alg4} is depicted in the figure below.
\begin{center}
\includegraphics[width=0.99\textwidth]{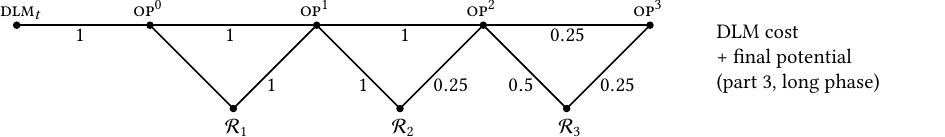}
\end{center}
By combining \eqref{eq:long_phase_alg1}, \eqref{eq:long_phase_alg2},
\eqref{eq:long_phase_alg3} and \eqref{eq:long_phase_alg4} (or simply 
adding edge coefficients on the last three figures), we observe that the
budget (\eqref{eq:long_phase_opt}, i.e., the edge coefficients on the first
figure) is not exceeded. This shows that \eqref{eq:4comp} holds for any long
phase. Recall that in the previous subsection we showed that \eqref{eq:4comp}
holds also for any short phase. This concludes the proof of
\lref[Theorem]{thm:4comp}.


\section{Lower Bound for Phase-Based Algorithms}
\label{sec:lower}

In this section, we show that, under some additional assumptions, no algorithm
operating in phases of fixed length can beat the competitive ratio $R_0
\approx 4.086$ achieved by \MTLM~\cite{bartal-tcs}~(see~\lref[Section]{sec:previous_work} 
for its definition), where $R_0$ is the largest (real) root of the equation
\begin{equation}
\label{eq:r0_definition}
  R^3 - 5R^2 + 3R + 3 = 0.
\end{equation}

Let $\vOPT$ and $\vALG$ be the positions of the files of \OPT and \ALG,
respectively. \ALG and \OPT start at the same point of the metric.  A
\emph{fixed-phase-length algorithm} chooses phase length $c \cdot D$ and after
every $c \cdot D$ requests makes a~migration decision solely on the basis of its
current position and the last $c \cdot D$ requests. In particular, it cannot
store the history of past requests beyond the current phase. Bartal et
al.~\cite{bartal-tcs} showed that no phase-based algorithm can achieve
competitive ratio better than~$3.847$ (for $D$ tending to infinity).

We present our lower bound in a model that gives an additional power to the
adversary. Let $f$ denote the distance between $\vALG$ and $\vOPT$ at the end of
a~phase. Then, at the beginning of the next phase $P$, the adversary removes the
existing graph and creates a completely new one in which it chooses a new
position for $\vALG$. It creates a sequence of requests constituting phase~$P$
and runs \ALG on $P$. Finally, it chooses a strategy for \OPT on $P$, with the
restriction that the initial distance between \OPT and \ALG files is exactly
$f$. We call this setting \emph{dynamic graph model}. We emphasize that the
analysis of \MTLM~\cite{bartal-tcs} in fact uses the dynamic graph model: each
phase is analyzed completely separately from others. At the end of
\lref[Section]{sec:lower}, we explain why this additional power given to the
adversary is necessary for our construction. As in \cite{bartal-tcs}, our lower
bound is achieved for $D$ tending to infinity.


\subsection{Using a Known Lower Bound for Short Phases}
\label{sec:comp_ratio_on_short_phases}

The lower bound given for fixed-phase-length algorithms by Bartal et
al.~\cite{bartal-tcs} is already sufficient to show the desired
lower bound for shorter phase lengths. (It can also be used for very long
phases, but we do not use this property.)

\begin{lemma}
\label{lem:comp_ratio_on_short_phases}
Let $c_T = 2 (R_0+1) / (R_0^2 - 2R_0 -1) \approx 1.352$. No fixed-phase-length 
algorithm using phase lengths $c \cdot D$ with $c \leq c_T$ can achieve 
competitive ratio lower than~$R_0$. 
\end{lemma}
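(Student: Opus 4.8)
The plan is to reduce the dynamic-graph lower bound for short-phase algorithms to the fixed-phase lower bound of Bartal et al.~\cite{bartal-tcs}. Recall that their $3.847$ lower bound is proved on an instance that repeats a single fixed metric (not the dynamic-graph model), yet it certifies that on \emph{that} instance the algorithm pays at least $3.847$ times \OPT. What I would actually use is the stronger statement implicit in their construction: for phase length $c \cdot D$, they exhibit, on a suitable fixed graph, a per-phase trade-off between the cost ratio and the change in $d(\vALG,\vOPT)$, and the $R_0$-competitiveness of \MTLM comes from choosing $c = c_0 \approx 1.841$ optimally. The point of \lref[Lemma]{lem:comp_ratio_on_short_phases} is that for \emph{small} $c$ (namely $c \le c_T$) the construction already forces a ratio of at least $R_0$ on its own, without needing the adversary's extra power to rebuild the graph.

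Concretely, I would proceed as follows. First I would recall (or re-derive from the structure already set up for \MTLM in \lref[Section]{sec_LP_MTLM}) the canonical hard instance for a fixed phase of length $c\cdot D$: a short burst of requests near $\vALG$ followed by a longer burst at a point $b$ at distance $1$. On this instance the algorithm either stays put, paying about $c \cdot D$ in request cost against an \OPT cost of order $D$, or migrates toward $b$, paying the migration cost $D$ plus accumulated request cost. Optimizing the adversary's choice (how soon to switch to $b$, how far $b$ is) against the algorithm's best response gives a cost ratio that is a function $\rho(c)$ of the phase-length parameter. The second step is the calculation that $\rho(c) \ge R_0$ whenever $c \le c_T = 2(R_0+1)/(R_0^2 - 2R_0 - 1)$; here the threshold $c_T$ is exactly the value of $c$ at which the relevant boundary case in the optimization crosses the root equation $R^3 - 5R^2 + 3R + 3 = 0$ for $R_0$. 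Since the instance is a single fixed graph, repeating it forever (or, if one prefers uniformity, invoking the dynamic-graph model trivially — a single graph is a legal choice there) yields $\CALG \ge R_0 \cdot \COPT - \gamma$ for a constant $\gamma$, which is the claim.

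The main obstacle is pinning down $\rho(c)$ precisely and verifying the inequality $\rho(c)\ge R_0$ for all $c\le c_T$ with the right constant. This requires a careful case analysis of the algorithm's response — in particular handling the regime where the algorithm migrates only partway, or migrates in response to the initial burst rather than to $b$ — and checking that none of these responses does better than the bound; the $D\to\infty$ limit must be taken cleanly so that integrality of $c\cdot D$ and the additive constant $\gamma$ do not interfere. A secondary subtlety is that we want the bound to hold \emph{for every} $c \le c_T$ simultaneously (the algorithm picks $c$, the adversary reacts), so the estimate $\rho(c)\ge R_0$ must be monotone or at least uniformly valid on the whole interval $(0, c_T]$; I expect monotonicity of $\rho$ in this range, with equality $\rho(c_T) = R_0$ at the endpoint, which is why $c_T$ is defined the way it is. Once $\rho(c)\ge R_0$ on $(0,c_T]$ is established, the lemma follows immediately.
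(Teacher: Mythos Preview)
Your plan is on the right track---both you and the paper lean on the lower bound of Bartal et al.~\cite{bartal-tcs} for fixed-phase algorithms---but you are making the argument considerably harder than it needs to be, and in the process you miss the one concrete ingredient that makes the proof a two-line calculation.

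The paper does not re-derive any hard instance or optimize over algorithm responses. It simply quotes the explicit formula from Theorem~3.2 of~\cite{bartal-tcs}:
\[
  L(c) \;=\; \inf_{a \in (0,1)} \max \left\{ \frac{a}{1-a},\; \frac{c+2}{ca} + 1,\; c(a+1) + 1 \right\},
\]
which is already a lower bound on the competitive ratio of any fixed-phase algorithm with phase length $c\cdot D$. The whole proof is then a case split on the adversary's parameter $a$: if $a > R_0/(1+R_0)$, the first term $a/(1-a)$ already exceeds $R_0$; if $a \le R_0/(1+R_0)$ and $c \le c_T$, the second term $(c+2)/(ca)+1$ is at least $R_0$ (and $c_T$ is precisely the value of $c$ for which this term equals $R_0$ at $a = R_0/(1+R_0)$). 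Hence $L(c)\ge R_0$ for all $c\le c_T$, and the lemma follows.

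Compared with this, your proposal spends effort on issues that the formula $L(c)$ has already absorbed: the ``partway migration'' case, the $D\to\infty$ limit, and the uniformity over $c\in(0,c_T]$ are all handled inside Bartal et al.'s theorem; your monotonicity concern is unnecessary because the two-case argument works pointwise for every $c\le c_T$. So there is no genuine error in your outline, but the missing idea is to invoke the closed-form $L(c)$ and split on $a$ rather than reconstruct the instance from scratch.
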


\begin{proof}
Theorem 3.2 of~\cite{bartal-tcs} states that 
no algorithm using phases of length $c \cdot D$ can have competitive ratio smaller than 
$L(c) = \inf_{a \in (0,1)} L(c,a)$, where
\begin{equation}
\label{eq:lc}
  L(c,a) = \max \left\{ \frac{a}{1-a}, \; 
    \left(1+\frac{2}{c} \right) \cdot \frac{1}{a} + 1, \;
    c \cdot (a+1) + 1 \right \}.
\end{equation}
Theorem 3.2 of~\cite{bartal-tcs} also shows that $L(c) \geq
3.847$ for any $c$. We may however strengthen this bound for the case $c \leq c_T$.
To this end, we consider two cases.
When $a \in [R_0/(1+R_0),1)$, then $1 - a \leq 1 - R_0/(1+R_0) = 1 / (1+R_0)$, and thus 
$L(c,a) \geq a/(1-a) \geq R_0$. When $a \in (0,R_0/(1+R_0))$,
then $1+2/c \geq 1+2/c_T = 1 + (R_0^2 - 2 R_0 -1) / (R_0+1)
= (R_0^2 - R_0) / (R_0+1)$, and thus $L(c,a) \geq (1+2/c) / a + 1 \geq R_0 - 1 + 1 = R_0$.
Therefore, $L(c,a) \geq R_0$ for any $a \in (0,1)$, and hence $L(c) \geq R_0$. 
\end{proof}

By the lemma above, in the remaining part of this section, we focus only on 
online algorithms that operate in phases of length greater than $c_T \cdot D$. 


\subsection{Key Ideas}
\label{sec:key_ideas}

We start with a general overview of our approach. In this informal
description, we omit a few details and ignore lower-order terms.  At the very
beginning, \ALG and \OPT keep their files at the same point.

The adversarial construction consists of an arbitrary number of plays. There
are three types of plays: \emph{linear}, \emph{bipartite}, and
\emph{finishing}. The first two plays consist of a single phase, while the last
one may take multiple phases. A prerequisite for applying a given play is
a~particular distance between $\vALG$ and $\vOPT$. Each play has some
properties: it incurs some cost on \ALG and \OPT, and ends with
$\vALG$ and $\vOPT$ at a specific distance.

When $\vALG = \vOPT$, the adversary uses the linear play: 
the generated graph is a single edge of length $1$. At the end of the phase,
it is guaranteed that $d(\vALG,\vOPT) = 1$. For such play
$P$, we have $\CALG(P) \geq R_0 \cdot \COPT(P) - (1/(1-2\alpha)) \cdot D$,
where 
\begin{equation}
	\alpha = 1/(R_0 - 1) \approx 0.324. 
\end{equation} 
Note that for this play alone, the adversary does not enforce the desired
competitive ratio of $R_0$, but it increases the distance between $\vOPT$
and~$\vALG$.

In each of the next $L$ phases, the adversary employs the bipartite play: the
used graph is a~bipartite structure. Let $f$ be the value of $d(\vALG,\vOPT)$ at
the beginning of a phase. If the algorithm performs well, then at the end of the
play this distance decreases to $2 \alpha \cdot f$. Furthermore, for such
play~$P$, it holds that $\CALG(P) \geq R_0 \cdot \COPT(P) + f \cdot D$, i.e.,
the inequality $\CALG(P) \geq R_0 \cdot \COPT(P)$ holds with the slack $f \cdot
D$.  The sum of these slacks over $L$ plays is $\sum_{i=0}^{L-1} (2\alpha)^i
\cdot D$, which tends to $(1/(1-2\alpha)) \cdot D$ when $L$ grows.  Hence, after
one linear and a large number of bipartite plays, the cost paid by \ALG
(ignoring lower-order terms) is at least $R_0$ times the cost paid by \OPT and
the distance between their files is negligible.

Finally, to decrease the distance between $\vALG$ and $\vOPT$ to zero, the
adversary uses the finishing play. It incurs a negligible cost and it forces the
positions of \ALG and \OPT files to coincide. Therefore, the whole adversarial
strategy described in this subsection can be repeated arbitrary number of times.


\subsection{States}

To formally define the plays that were sketched in the subsection above, we 
introduce the concept of \emph{states}. A state is defined between plays and
depends on the distance between $\vALG$ and $\vOPT$. Recall that $\alpha = 1/(R_0-1)$.

\begin{enumerate}
\item State $S$: $\vALG = \vOPT$.
\item State $A_\ell$ for $\ell \in \{0, \ldots, L\}$: $d(\vALG,\vOPT) = (2\alpha)^\ell$.
\item State $G_\ell$ for $\ell \in \{0, \ldots, L-1\}$: $d(\vALG,\vOPT) = 3 \alpha \cdot (2\alpha)^\ell$.
\end{enumerate}

Our construction is parameterized by integers $L$ and $k$; the latter
is a parameter used in the bipartite play. Our construction requires that $D \geq k / c$. 
We define
\[
  \eps = \max \left\{\sum_{i=L}^\infty (2\alpha)^i, \frac{4 R_0}{k+4} \right\} 
    = \max \left\{ \frac{(2\alpha)^L}{1-2\alpha}, \frac{4 R_0}{k+4} \right\}.
\]
Note that $\eps$ tends to zero with increasing $L$ and $k$. 

Our goal is to show that on the adversarial sequence $\I$ of plays,
$\CALG(\I) \geq (R_0 - \eps) \cdot \COPT(\I) - \gamma$, where $\gamma$ is a
constant not depending on $\I$. We show that $\I$ can be made arbitrarily
costly and hence the constant $\gamma$ becomes negligible. As $\eps$ can be
made arbitrarily small, this implies the lower bound of $R_0$.

More concretely, on any play $P$, we measure the amount $\CALG(P) - (R_0 - \eps)
\cdot \COPT(P)$; we call this amount \emph{play gain}. In the following three
sections, we define adversarial plays: for each state there is one play
that can start at this state. For each play, we characterize
possible outcomes: play gains and the resulting distances between the files of
$\ALG$ and $\OPT$, i.e., the resulting states.
Finally, in \lref[Section]{sec:all_plays}, we analyze the total gain on any
sequence of plays and show how we can use it to prove the lower bound on the
competitive ratio of \ALG.


\subsection{Relation to the algorithm MTLM}

While it is not necessary for the completeness of the lower bound proof, 
it is worth noting that (again neglecting lower-order terms) each play
constitutes a tight example for the \emph{amortized} performance of the $R_0$-competitive 
algorithm \MTLM~\cite{bartal-tcs}. 

Recall that \MTLM operates in phases of length $c_0 \cdot D$, where $c_0
\approx 1.841$ is the only positive root of the equation $3c^3 - 8c - 4 = 0$,
and at the end of any phase consisting of requests $r_1, r_2, \ldots, r_{c_0
\cdot D}$, \MTLM migrates the file to a point $x$ minimizing the 
expression $D \cdot d(\vMTLM,x) +
\frac{c_0+1}{c_0} \sum_{i=1}^{c_0 \cdot D} d(x,r_i)$, called \emph{\MTLM minimizer}.
The analysis of \MTLM presented in~\cite{bartal-tcs} uses the potential
function $\PhiMTLM = (c_0+1) \cdot [\vMTLM, \vOPT]$.

For each play $P$ presented below, the cost of \MTLM on play $P$ (denoted
$\CMTLM(P)$) plus the induced change in the potential (denoted $\Delta
\PhiMTLM (P)$) is at least $(R_0 - \eps)$ times the cost of $\OPT$ (denoted
$\COPT(P)$). We argue that this is the case when presenting particular plays.

Finally, we note that the plays themselves were suggested by the output of 
the LP that upper-bounds the competitive ratio of \MTLM (cf.~\lref[Section]{sec:lp_MTLM}).
We discuss the details in \lref[Section]{sec:lp_conclusions}.


\subsection{Linear Play}
\label{sec:linear_play}

Assume a phase starts in state $S$, i.e., $\vALG = \vOPT$. Then, the adversary
may employ the following (single-phase) \emph{linear} play. The created graph
consists of two nodes, $a = \vALG = \vOPT$ and $b$, connected with an~edge of
length $1$, cf.~\lref[Figure]{fig:plays}. Let $t = 1+1/R_0 \approx
1.245$. Recall that $c \cdot D$ denotes the phase length of \ALG. By
\lref[Lemma]{lem:comp_ratio_on_short_phases}, we may assume that $c > c_T
\approx 1.352$, and therefore $t < c$. The first $(c-t) \cdot D$ requests of the 
linear play are given at $a$ and the following $t \cdot D$ requests are given
at $b$.

\begin{lemma}
\label{lem:linear_play}
If a phase starts in state $S$ and the adversary uses the linear play $P$,
then the phase ends in state~$A_0$ and the play gain is at least $-\sum_{i=0}^{L-1}
(2\alpha)^i \cdot D$.
\end{lemma}

\begin{proof}
Note that \ALG pays $1$ for each of the last $t$ requests. We consider two
cases depending on possible actions of \ALG at the end of $P$.

\begin{enumerate}
\item $\ALG$ migrates the file to $b$.
	In this case $\CALG(P) = (t + 1) \cdot D$. \OPT then chooses to keep its
	file at $a$ throughout $P$ paying $t \cdot D$. Then,
\[
	\CALG(P) - R_0 \cdot \COPT(P) = (t + 1 - R_0 \cdot t) \cdot D 
		= (1/R_0 + 1 - R_0) \cdot D.
\]

\item $\ALG$ keeps the file at $a$. In this case $\CALG(P) = t \cdot D$. \OPT
	then keeps its file at~$a$ for the first $c-t$ requests, migrates its file to
	$b$, and keeps it there till the end of $P$. Altogether, $\COPT(P) = D$.
	Then,
\[
	\CALG(P) - R_0 \cdot \COPT(P) 
		= (t - R_0 \cdot 1) \cdot D 
		= (1/R_0 + 1 - R_0) \cdot D.
\]
\end{enumerate}

In both cases, the resulting state is $A_0$. Using the definition of $R_0$ 
(see~\eqref{eq:r0_definition}), it can be verified that 
$R_0 - 1 - 1/R_0 = (R_0-1)/(R_0-3)$. By the definition
of~$\alpha$, this is equal to $1/(1-2\alpha)$.
Therefore, using $\COPT(P) \geq D$, we obtain that the play gain is 
\begin{align*}
\CALG(P) - (R_0-\eps) \cdot \COPT(P) 
	= &\; \eps \cdot \COPT(P) - \left( R_0 - 1 - 1/R_0 \right) \cdot D \\
	\geq &\; \left( \eps - \frac{1}{1-2\alpha} \right) \cdot D 
	=  \left(\eps - \sum_{i=0}^\infty (2\alpha)^i \right) \cdot D\\
	\geq &\; - \sum_{i=0}^{L-1} (2\alpha)^i \cdot D.
\end{align*}
\end{proof}

\emph{Note on \MTLM performance:} 
It can be easily verified that on the linear play, both $a$ and $b$ are \MTLM
minimizers. For either choice, the linear play is a tight example for the
amortized performance of \MTLM. To show this, observe that the distance
between the files of \MTLM and \OPT grows by $1$, and hence $\Delta
\PhiMTLM(P) = (c_0 + 1) \cdot D$. By \lref[Lemma]{lem:linear_play}, 
$\CMTLM(P) - (R_0-\eps) \cdot \COPT(P) \geq - \sum_{i=0}^{L-1} (2\alpha)^i
\cdot D > - \sum_{i=0}^{\infty} (2\alpha)^i \cdot D = 1/(1-2\alpha) \cdot D =
(c_0+1) \cdot D = - \Delta \PhiMTLM(P)$ as desired.


\subsection{Bipartite Play}
\label{sec:bipartite_play}

Assume a phase starts in state $A_\ell$ for $\ell \in \{0, \ldots, L-1\}$,
i.e., $d(\vALG,\vOPT) = (2\alpha)^\ell$. Then, the adversary may employ the
following (single-phase) \emph{bipartite} play.

\begin{figure}[t]
\begin{center}
\includegraphics[width=.8\textwidth]{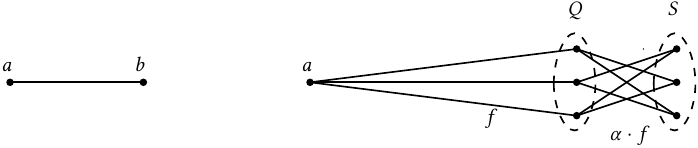}
\end{center}
\caption{
A graph used in the linear play (left) and a graph used in the bipartite play 
for $k = 3$ (right). Node $a$ denotes the initial position of an online algorithm.}
\label{fig:plays}
\end{figure}

The construction is parameterized by an integer $k \geq 3$. The graph created
by the adversary is bipartite and consists of the following three parts:
singleton set $\{a\}$, set $Q$, and set~$S$, where $|Q| = |S| = k$,
see~\lref[Figure]{fig:plays}. Node~$a$ is connected with all nodes
from $Q$ by edges of length~$f = (2\alpha)^\ell$. The connections between $Q$
and $S$ constitute an almost complete bipartite graph, whose edges are of
length $\alpha \cdot f$. Namely, we number all nodes from $Q$ and $S$ as
$q_1,q_2,\ldots,q_k$ and $s_1,s_2,\ldots,s_k$, respectively, and we connect
$q_i$ with $s_j$ if and only if $i \neq j$. An example for $k = 3$ is given in
\lref[Figure]{fig:plays}. As $k \geq 3$, any pair of nodes from~$S$
shares a common neighbor from $Q$ and hence the distance between them is
exactly $2 \alpha \cdot f$.

Initially, $\vALG = a$. As allowed in the dynamic graph model, the exact
initial position of $\vOPT$ will be determined later based on the behavior of
$\ALG$; in any case it will be initially in set~$Q$, so that $d(\vALG,\vOPT) =
f$ at the beginning of the phase.
All the requests are given at nodes from $S$ in a~round-robin fashion (the
adversary fixes an~arbitrary ordering of nodes from $S$ first). Recall that 
we assumed $D \geq k / c$, so that each node of $S$ issues at least one request.

\begin{lemma}
\label{lem:bipartite_play}
If a phase starts in state $A_\ell$, for $\ell \in \{0, \ldots, L-1\}$,
and the adversary uses the bipartite play $P$, then at the end of the phase 
one of the following conditions hold:
\begin{enumerate}
\item the resulting state is $A_\ell$ and the play gain is at least zero;
\item the resulting state is $A_{\ell+1}$ and the play gain is 
	at least $(2\alpha)^\ell \cdot D$;
\item the resulting state is $G_\ell$ and the play gain is at least $(1+\alpha) \cdot
	(2\alpha)^\ell \cdot D$.
\end{enumerate}
\end{lemma}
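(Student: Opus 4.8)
The plan is to construct the bipartite graph in the flavor of the tight instance for \MTLM (see \lref[Figure]{fig:tight_examples}): a ``source'' side containing the node $a=\vALG$ together with $\vOPT$ (initially at distance $(2\alpha)^\ell$ from $a$), and a ``sink'' side $Q$ of $k$ nodes, with $\req$ issued at nodes of a multiset $S$ (in the simplest version just repeatedly at $\vOPT$ or at $a$) so that every node of $Q$ is at (almost) the same distance from each request point. The parameter $k$ controls how well a single point can ``average out'' over $Q$; the crucial feature of such a gadget is that \ALG, forced to serve $c\cdot D$ requests from $a$, either stays near $a$ (and is charged heavily for the request cost while \OPT sits at an exploiting point), or migrates to some node whose expected benefit is limited by the $1/(k+4)$-type slack absorbed into $\eps$. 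First I would fix the exact edge lengths so that the three claimed outcomes correspond precisely to: \ALG staying put (outcome 1, distance unchanged to $A_\ell$), \ALG moving ``partway'' to a point that decreases $d(\vALG,\vOPT)$ to $(2\alpha)^{\ell+1}=2\alpha\cdot(2\alpha)^\ell$ (outcome 2), or \ALG moving ``too far'' into $Q$, leaving distance $3\alpha\cdot(2\alpha)^\ell$ (outcome 3, state $F$).

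Next I would write down \OPT's strategy for each case and compute $\COPT(P)$ and $\CALG(P)$ up to the lower-order $\eps$ terms, using \lref[Lemma]{lem:opt_lower_bound} (or direct counting as in \lref[Lemma]{lem:linear_play}) to lower bound \OPT's request cost against its own migration. The key algebraic identities will again be consequences of the defining cubic \eqref{eq:r0_definition} for $R_0$ and the definition $\alpha=1/(R_0-1)$ — in particular relations like $2\alpha(R_0-1)=2$, $(1-2\alpha)^{-1}=R_0-1-1/R_0$, and the quantity $3\alpha$ arising as the ``overshoot'' distance — so that in the borderline cases the surplus $\CALG(P)-R_0\cdot\COPT(P)$ comes out exactly to $0$, to $(2\alpha)^\ell\cdot D$, or to $(1+\alpha)(2\alpha)^\ell\cdot D$ respectively. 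Since $\eps\ge 4R_0/(k+4)$ and the only error introduced by the $k$-node averaging is of order $(\COPT(P))/(k+4)$, subtracting $\eps\cdot\COPT(P)$ instead of only the ``$R_0$'' term exactly compensates the granularity loss, giving the stated play gains; this is precisely why $\eps$ was defined with that second argument in the max.

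I expect the main obstacle to be the case analysis on \ALG's migration target: \ALG may move to an arbitrary point of the new metric, not merely to a vertex of the gadget, so I need the geometry of the bipartite instance to be rigid enough that every possible destination falls into one of the three regimes with no better option available to \ALG. Concretely, I would show that for any destination $v$, either $d(v,\vOPT)\ge(2\alpha)^\ell$ (no progress — outcome 1 after accounting that \ALG then also paid the migration, which \OPT can punish), or $d(v,\vOPT)\in[(2\alpha)^{\ell+1},\,\text{something}]$ and the request cost \ALG saved is bounded so outcome 2 holds, or \ALG overshot and $d(v,\vOPT)$ is large, giving outcome 3. Bounding \ALG's savings uniformly over all $v$ is the delicate part, and it is exactly the place where the $k$-fold replication of $Q$ (hence the $4R_0/(k+4)$ term in $\eps$) is used: with many equidistant sink nodes, no single destination can be much better than the ``canonical'' ones, so the three discrete outcomes are exhaustive up to the $\eps$ slack. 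The remaining steps — verifying the resulting-state bookkeeping and the closed-form play gains — are then routine substitutions into the $R_0$-cubic.
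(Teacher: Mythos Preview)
Your proposal misses the key structural and mechanistic ideas of the construction, and as written would not yield the three outcomes.

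First, the geometry is wrong. You place $\vOPT$ on the same ``source'' side as $a$, with requests at $\vOPT$ or $a$. In the paper's gadget there are \emph{three} layers: $\{a\}$, a set $Q$ of $k$ nodes (each at distance $f=(2\alpha)^\ell$ from $a$), and a set $S$ of $k$ nodes; $Q$ and $S$ are joined by a bipartite graph of edge length $\alpha f$ that is \emph{complete minus a perfect matching}. Requests are issued round-robin at the nodes of $S$, and $\vOPT$ sits in $Q$, not near $a$. The missing matching is what makes the case analysis work: every $q_i\in Q$ is at distance $\alpha f$ from all of $S$ except $s_i$, which is at distance $3\alpha f$. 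This is also the true role of $k$: it makes the fraction of ``expensive'' requests for $\OPT$ only $1/k$, so $\COPT(P)\le(1+4/k)\alpha f\cdot cD$; together with $\eps\ge 4R_0/(k+4)$ one gets $(R_0-\eps)\COPT(P)\le R_0\cdot\alpha f\cdot cD$. Since $\ALG$'s request cost from $a$ is $(1+\alpha)f\cdot cD$ and $R_0\alpha=1+\alpha$ by definition of $\alpha$, these cancel and the play gain is simply at least $\CALG^M(P)$, the migration cost. Your plan to invoke \lref[Lemma]{lem:opt_lower_bound} here is off; that lemma upper-bounds $\COPT$ for the algorithm's analysis, whereas for a lower bound you compute $\COPT$ directly from the explicit strategy.

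Second, the decisive mechanism you omit is that in the dynamic graph model the adversary fixes $\vOPT$'s exact node in $Q$ \emph{after} seeing $\ALG$'s move. If $\ALG$ stays at $a$, any node of $Q$ works and the state is still $A_\ell$ with gain $\ge 0$. If $\ALG$ moves to some $q\in Q$ (paying $fD$), the adversary puts $\vOPT$ at a different node of $Q$, so $d(\vALG,\vOPT)=2\alpha f$ and the state is $A_{\ell+1}$ with gain $\ge fD=(2\alpha)^\ell D$. If $\ALG$ moves to some $s\in S$ (paying $(1+\alpha)fD$), the adversary puts $\vOPT$ at the unique non-adjacent $q$, giving distance $3\alpha f$, state $F$, gain $\ge(1+\alpha)(2\alpha)^\ell D$. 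There is no ``partway'' regime and no continuous optimization for $\ALG$: the adversary creates a finite graph, so $\ALG$'s file must land on one of the vertices $\{a\}\cup Q\cup S$, and the three cases above are exhaustive. Your worry about uniform bounds over arbitrary destinations is therefore unnecessary, and the $R_0$-cubic plays no role in this lemma beyond the trivial identity $R_0\alpha=1+\alpha$.
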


\begin{proof}
$\OPT$ keeps its file at one node from $Q$ for the whole play $P$. It
pays $\alpha \cdot f$ for any request at $k-1$ neighboring nodes from
$Q$ and $3 \alpha \cdot f$ for any request at the only non-incident node from
$Q$. As requests are given in a round-robin fashion, the number of requests at
that non-incident node is $m \leq \lceil c D/k \rceil \leq 2 c D /k$, and the
total cost of $\OPT$ is
\begin{align*}
	\COPT(P) 
		= &\; \alpha \cdot f \cdot (cD-m) + 3 \alpha \cdot f \cdot m \\
		= &\; \alpha \cdot f \cdot (cD + 2m) \\
		\leq &\; (1+4/k) \cdot \alpha \cdot f \cdot cD.
\end{align*}
By the definition of $\eps$, it holds that $(R_0-\eps) \cdot (1+4/k) = R_0 +
(4 R_0 / k - \eps \cdot (1+4/k)) \leq R_0$. Furthermore, we split the cost of
$\ALG$ on $P$ into the cost of serving the requests, $\CALGreq(P)$, and the
migration cost $\CALGmove(P)$. The former is exactly $\CALGreq(P) = (1+\alpha)
\cdot f \cdot c D$. Therefore,
\begin{align*}
	\CALG(P) - (R_0-\eps) \cdot \COPT(P) 
		= &\; \CALGmove(P) + \CALGreq(P) - (R_0-\eps) \cdot \COPT(P)  \\ 
		\geq &\; \CALGmove(P) + (1+\alpha) \cdot f \cdot c D - R_0 \cdot \alpha \cdot f \cdot cD \\ 
		= &\; \CALGmove(P),
\end{align*}
where the last equality follows as $R_0 \cdot \alpha = 1+\alpha$ by the
definition of $\alpha$. Hence, for lower-bounding the play gain, it is sufficient
to lower-bound $\CALGmove(P)$. We consider several possible migration options
for \ALG on the bipartite play.

\begin{enumerate}
\item \ALG keeps its file at $a$. In this case $\CALGmove(P) = 0$, and the resulting 
state is still~$A_\ell$. 

\item \ALG migrates the file to a node $q \in Q$, paying $f \cdot D =
(2\alpha)^\ell \cdot D$ for the migration. The adversary chooses its original
position to be any node from~$Q$ different from $q$. Therefore, the final
distance between the files of \ALG and \OPT is exactly $2 \alpha \cdot f$. The
resulting state is~$A_{\ell+1}$ and the play gain is at least $\CALGmove(P) =
(2\alpha)^\ell \cdot D$.

\item \ALG migrates the file to a node $s \in S$. The adversary chooses its
original position to be (the only) node from $Q$ not directly connected to
$s$. The cost of migration is $(1+\alpha) \cdot f \cdot D$ and the resulting
distance between \ALG and \OPT files is then $3 \alpha \cdot f$, i.e., the
play ends in state $G_\ell$.
\end{enumerate}
\end{proof}

\emph{Note on \MTLM performance:} 
It can be easily verified that on the bipartite play, all \MTLM minimizers are
in set $Q$. In effect, the bipartite play is a tight example for the amortized
performance of \MTLM. To show this, observe that the distance between the
files of \MTLM and \OPT decreases from $(2 \alpha)^\ell$ to $(2
\alpha)^{\ell+1}$, and thus $\Delta \PhiMTLM(P) = (c_0 + 1) \cdot (2 \alpha -
1) \cdot (2 \alpha)^\ell \cdot D = - (2 \alpha)^\ell \cdot D$. By
\lref[Lemma]{lem:bipartite_play}, $\CMTLM(P) - (R_0-\eps) \cdot \COPT(P) \geq
(2 \alpha)^\ell \cdot D = -\Delta \PhiMTLM(P)$ as desired.


\subsection{Finishing Play}
\label{sec:finishing_play}

Assume a phase starts in state $A_L$ or $G_\ell$ for any $\ell \in
\{0,\ldots,L-1\}$ and let $f$ be the initial distance between $\vALG$ and $\vOPT$.
Then, the adversary may employ the following (multi-phase) finishing play. 
The created graph consists of two nodes $\vALG$ and $\vOPT$, connected by
an~edge of length~$f$. In a phase of this play, \OPT never moves and all
requests are issued at~$\vOPT$. If at the end of the phase
\ALG does not migrate the file to $\vOPT$, the adversary repeats the phase.

\begin{lemma}
\label{lem:finishing_play}
Assume that a phase starts in state $A_L$ or $G_\ell$ for any $\ell \in
\{0,\ldots,L-1\}$, and let $f$ be the initial distance between $\vALG$ and
$\vOPT$. If the adversary uses the finishing play $P$, the play ends in
state~$S$ and its gain is at least $(c+1) \cdot f \cdot D$.
\end{lemma}

\begin{proof}
The cost of $\OPT$ in any phase of $P$ is $0$. Hence, any competitive algorithm
has to finally migrate to~$\vOPT$, possibly over a sequence of multiple
phases, i.e., the final state is always of type $S$. In the first phase of
$P$, \ALG pays at least $f \cdot c \cdot D$ for the requests. Furthermore,
within~$P$, \ALG migrates the file along the distance of at least $f$, paying
$f \cdot D$. The play gain is then $\CALG(P) - (R_0-\eps) \cdot \COPT(P)
\geq \CALG(P) \geq (c+1) \cdot f \cdot D$.
\end{proof}

\emph{Note on \MTLM performance:}
Clearly, on the finishing play, the \MTLM minimizer is equal to $\vOPT$. This
implies that the finishing play is a tight example for the amortized
performance of \MTLM. To show this, observe that the distance between the
files of \MTLM and \OPT decreases by~$f$, and thus the corresponding potential
decreases by $(c_0 + 1) \cdot f \cdot D$. Therefore, by
\lref[Lemma]{lem:finishing_play}, $\CMTLM(P) - (R_0-\eps)
\cdot \COPT(P) = -\Delta \PhiMTLM(P)$ as desired.


\subsection{Combining All Plays}
\label{sec:all_plays}

\begin{figure}[t]
\centering
\includegraphics[width=0.99\textwidth]{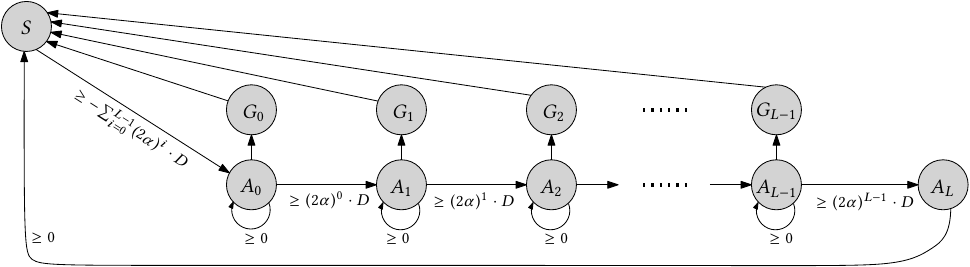}
\caption{Plays as transitions between states. The gains for all plays avoiding states $G_\ell$ 
are given at the corresponding edges.}
\label{fig:transitions} 
\end{figure}

In \lref[Figure]{fig:transitions}, we summarized the possible transitions between states, as described in 
\lref[Lemma]{lem:linear_play},
\lref[Lemma]{lem:bipartite_play}, and 
\lref[Lemma]{lem:finishing_play}.
Our goal is to show that if the game between an algorithm and the adversary starts 
at state $S$ and proceeds along described plays, then the total gain on all plays can be lower-bounded 
by a constant independent of the input sequence. 
It is worth observing that the total gain on a cycle 
$S \to A_0 \to A_1 \to \ldots A_{L-1} \to A_L \to S$ is non-negative (this corresponds
to a~scenario informally described in \lref[Section]{sec:key_ideas}).

For a formal argument, for any state $q$, we introduce its potential $\Psi(q)$, defined as:
\begin{itemize}
\item $\Psi(S) = 0$,
\item $\Psi(A_\ell) = -\sum_{i=\ell}^{L-1} (2\alpha)^i \cdot D$ for any $\ell \in \{0, \ldots, L\}$,
\item $\Psi(G_\ell) = -\sum_{i=\ell}^{L-1} (2\alpha)^i \cdot D + (1+\alpha) \cdot (2\alpha)^\ell \cdot D$ 
  for any $\ell \in \{0, \ldots, L-1\}$.
\end{itemize}
We show that for any sequence of states starting at $S$ and ending at a state $q$, the total gain on all
corresponding plays is at least $\Psi(q)$. To this end, we prove the following helper lemma.

\begin{lemma}
Fix any play that starts at state $q_a$ and ends at state $q_b$, and let $T(q_a,q_b)$ be the play gain.
Then, $T(q_a,q_b) \geq \Psi(q_b) - \Psi(q_a)$.
\end{lemma}

\begin{proof}
We consider a few cases, depending on the state transition $q_a \to q_b$ and
the corresponding play, cf.~\lref[Figure]{fig:transitions}.

\begin{itemize}
\item For a linear play, the only possible state transition is $S \to A_0$. By \lref[Lemma]{lem:linear_play}, 
  $T(S,A_0) \geq -\sum_{i=0}^{L-1} (2\alpha)^i \cdot D = \Psi(A_0) - \Psi(S)$.
\item For a bipartite play, the initial state is $A_\ell$, where $\ell \in \{0, \ldots, L-1\}$.
  We use \lref[Lemma]{lem:bipartite_play} and consider three sub-cases. 
  If the play ends at state $A_{\ell+1}$, then 
  $T(A_\ell,A_{\ell+1}) \geq (2\alpha)^\ell \cdot D = \Psi(A_{\ell+1}) - \Psi(A_\ell)$.
  If it ends at state $A_\ell$, then $T(A_\ell,A_\ell) \geq 0 = \Psi(A_\ell) - \Psi(A_\ell)$.
  Finally, if it ends at state $G_\ell$, then 
  $T(A_\ell,G_\ell) \geq (1+\alpha) \cdot (2\alpha)^\ell \cdot D = \Psi(G_\ell) - \Psi(A_\ell)$.
\item A finishing play always ends at $S$ and may start either at $A_L$ or at $G_\ell$, for 
  $\ell \in \{0,\ldots,L-1\}$. In the former case, by \lref[Lemma]{lem:finishing_play},
  $T(A_L,S) = (c+1) \cdot (2 \alpha)^L \cdot D > 0 = \Psi(S) - \Psi(A_L)$. In the latter case,
  the same lemma implies
  \begin{align*}
    T(G_\ell,S) + \Psi(G_\ell) 
      \geq & \; (c+1) \cdot (3 \alpha) \cdot (2 \alpha)^\ell \cdot D 
        -\sum_{i=\ell}^{L-1} (2\alpha)^i \cdot D + (1+\alpha) \cdot (2\alpha)^\ell \cdot D \\
      > & \; \left( (c+1) \cdot (3 \alpha) - \frac{1}{1-2\alpha} + (1+\alpha) 
        \right) \cdot (2\alpha)^\ell \cdot D \\
      > & \; 0 = \Psi(S).
  \end{align*}
The last inequality can be verified numerically: for 
$\alpha = 1/(R_0-1) \approx 0.324$ and $c \geq c_T = 2 (R_0+1) / (R^2 - 2R_0 -1) \approx 1.352$, it holds that $
(c+1) \cdot 3 \alpha - 1/(1-2\alpha) + (1+\alpha) > 0.768 > 0$.
\end{itemize}
\end{proof}

Using the potentials and the lemma above, we may finally prove \lref[Theorem]{thm:lower_bound}.

\begin{proof}[Proof of Theorem \ref{thm:lower_bound}]
We consider an arbitrary sequence $\I$ of plays generated by the adversary in a way described 
in \lref[Section]{sec:linear_play}, \lref[Section]{sec:bipartite_play} and \lref[Section]{sec:finishing_play}.
Let $S = q_0, q_1, \ldots, q_r$ be the induced sequence of states. 
Then, the total gain on sequence $\I$ is
\[ 
  \sum_{j=1}^r T(q_{j-1},q_j) \geq \sum_{j=1}^r \left(\Psi(q_j) - \Psi(q_{j-1})\right) 
  = \Psi(q_r) - \Psi(S) = \Psi(q_r),
\]
which means that $\CALG(\I) \geq (R_0-\eps) \cdot \COPT(\I) + \Psi(q_r)$.
For any state $q_r$, $\Psi(q_r) \geq \Psi(A_0) > - (1 / (1-2\alpha)) \cdot D$, and thus
$\CALG(\I) \geq (R_0-\eps) \cdot \COPT(\I) - (1 / (1-2\alpha)) \cdot D$.

As the minimum cost for \ALG on any play is lower-bounded by a constant, 
the sequence~$\I$ can be made arbitrarily costly for \ALG, making 
the constant $(1 / (1-2\alpha)) \cdot D$ negligible, and showing that the competitive 
ratio of \ALG is at least $R_0 - \eps$.
As $\eps$ can be made arbitrarily small by taking large values of parameters
$L$ and $k$ (this also requires large value of $D$ as we assumed $D \geq k/c$), 
no algorithm operating in phases of fixed length, in the dynamic
graph model, can achieve a~competitive ratio lower than $R_0$.
\end{proof}

One may wonder whether the dynamic graph model is essential to the presented
proof and whether an adversary cannot simply generate the whole sequence on a
larger but fixed graph. For a~fixed graph however, nodes that were used in the
previous plays become problematic in subsequent ones. To give a specific
example, consider two consecutive bipartite plays, corresponding to state
transitions $A_\ell \to A_{\ell+1}$ and $A_{\ell+1} \to A_{\ell+2}$,
respectively. Right after the former play ends, both \ALG and \OPT have their
files at nodes of set $Q$ of the former bipartite play,
cf.~\lref[Figure]{fig:plays}. Then, in the subsequent bipartite play,
almost all nodes of $S$ lie in the middle of the way from \ALG to \OPT: this
gives \ALG an opportunity of moving its file towards the file of \OPT, which
is not captured by our current analysis.

\medskip
\emph{Note on \MTLM performance:}
The proof of the lower bound on the competitive ratio for the algorithm \MTLM 
is a special case of the one presented above. We start the sequence in state $S$.
After the linear play, the next state is $A_0$. Then, we may use bipartite play 
$L$ times arriving at state~$A_L$, and finally the finishing play brings the game back to state $S$.
Each play is a lower bound on the amortized performance of \MTLM (see the
notes at the ends of \lref[Section]{sec:linear_play},
\lref[Section]{sec:bipartite_play} and \lref[Section]{sec:finishing_play}), and hence
the amortized cost over such
sequence of phases is at least $R_0 - \varepsilon$ times the cost of \OPT. But as the construction
starts and ends in state $S$, 
the final and the initial potentials cancel out and the amortized cost of \MTLM over such
sequence is equal to its actual cost. This implies that the competitive ratio of \MTLM is at least 
$R_0 - \varepsilon$. As $\varepsilon$ can be made arbitrarily small, the ratio is at least~$R_0$. 

We note that for \MTLM the proof could be made even simpler: the game starts at state $S$,
and then the adversary uses a linear play followed immediately by a finishing play. Again, as the 
sequence starts and ends in the same state, the amortized cost of \MTLM is equal to its actual cost,
which implies the desired lower bound on its competitive ratio.


\section{Linear Program for File Migration}
\label{sec:lp}

In this section, we present a linear programming model for the analysis of
both algorithm \MTLM by Bartal et al.~\cite{bartal-tcs} and later for our
algorithm \DLM. We also discuss how the former can be used for generating
tight cases for \MTLM that we used as part of our lower bound construction 
in \lref[Section]{sec:lower}.
Finally, we discuss how the LP was used to develop the combinatorial
proof presented in \lref[Section]{sec:algorithm}.


\subsection{LP Analysis of MTLM-like Algorithms} 
\label{sec:lp_MTLM}

We start by analyzing an \MTLM-like algorithm \ALG. We use our notation
for distances from \lref[Section]{sec:notation}. \ALG is a variant of
$\MTLM$ parameterized by two values $\beta$ and $\delta$. The length of its
phase is $\delta \cdot D$ and the initial point of $\ALG$ is denoted by $A_0$.
We denote the set of requests within a phase by~$\req$. At the end of a phase,
\ALG migrates the file to a~point~$A_1$ that minimizes the function
\[
  f (x) = [ A_0 ,x ] + \beta \cdot [ x, \req ].
\]

As in the amortized analysis of the algorithm \MTLM~\cite{bartal-tcs}, we use
a potential function equal to $\phi \cdot D$ times the distance between the
files of \ALG and \OPT, where $\phi$ is a parameter used in the analysis. We
let $O_0$ and $O_1$ denote the initial and final position of \OPT during the
studied phase, respectively. Then, the amortized cost of \ALG in a single
phase is $\CALG = \delta \cdot [A_0,\req] + [A_0,A_1] +
\phi \cdot ([A_1,O_1]-[A_0,O_0])$.

We observe that if in a given phase, the cost of \OPT, denoted $\COPT$, is
zero, then \OPT does not move its file ($O_0 = O_1$) and all requests are
given at $O_0$. Moreover, $\ALG$ migrates the file to $O_0$, and thus $\CALG =
(1+\delta - \phi) \cdot [A_0,O_0]$. Therefore, \ALG is competitive provided
that $\phi \geq 1+\delta$ and from now on we assume that $\COPT > 0$.

The following factor-revealing LP finds a worst-case instance for a single
phase of \ALG. Namely, it encodes inequalities that are true for any phase and
a graph on which \ALG can be run. The goal of the LP is to maximize the ratio
between $\CALG$ and $\COPT$. As $\COPT > 0$, an instance can be scaled:
we set $\COPT = 1$ and we maximize $\CALG$. Let $V = \{ A_0, A_1, O_0, O_1 \}$
and $V' = V \cup  \{\req\}$. Basic variables used in the LP correspond to all pairwise distances
between elements of $V'$ multiplied by $D$, i.e., we use variables $[v_i,v_j]$
for all pairs $v_i,v_j \in V'$.
\begin{equation*}
\begin{array}{ll}
\emph{maximize \CALG} & \\[0.3em]
\emph{subject to:} & \\[0.2em]
\quad \CALG = \delta \cdot [A_0,\req] + [A_0,A_1] + \phi \cdot ([A_1,O_1]-[A_0,O_0]) & \\[0.2em]
\quad \COPT = 1	 					& \\[0.2em]
\quad \COPT = \COPTreq + \COPTmove	& \\[0.2em]
\quad \COPTmove \geq [O_0,O_1] & \\[0.2em]
\quad 2 \cdot \COPT \geq \delta \cdot [O_0, \req] + \delta \cdot [O_1, \req] + (2-\delta) \cdot \COPTmove & \\[0.2em]
\quad f(A_1) \leq f(v) 				& \text{for all } {v \in V}  \\[0.2em]
\quad 0 \leq [v_1,v_3] \leq [v_1,v_2] + [v_2,v_3]  		& \text{for all } v_1,v_2,v_3\in V'
\end{array}
\end{equation*}
In the LP above, $\COPTreq$ and $\COPTmove$ denote the cost of \OPT for
serving the request and the cost of $\OPT$ for migrating the file,
respectively. The inequality $2 \cdot \COPT \geq \delta \cdot [O_0, \req] +
\delta \cdot [O_1, \req] + (2-\delta) \cdot \COPTmove$ is guaranteed by 
\lref[Lemma]{lem:opt_lower_bound}. Finally, the LP encodes that distances
between objects from $V' = V \cup \{ \req \}$ satisfy the triangle inequality.

For any choice of parameters $\beta$, $\delta$, and $\phi$, the LP above finds
an instance that maximizes the competitive ratio of $\ALG$. Note that such
instance is not necessarily a certificate that \ALG indeed performs poorly: in
particular, inequalities that lower-bound the cost of \OPT might not be tight.
However, the opposite is true: if the value of $\CALG$ returned by the LP is
$\xi$, then for any possible instance the ratio is at most $\xi$.

Let $c_0 = 1.841$ be the phase length of \MTLM. Setting $\delta = c_0$ and
$\beta = \phi = 1+c_0$ yields that the optimal value of the LP is $R_0
\approx 4.086$, which can be interpreted as a numerical counterpart of the
original analysis for \MTLM in~\cite{bartal-tcs}. 

To reproduce a formal mathematical proof that the competitive ratio of \MTLM is
at most $R_0$, we may appropriately combine inequalities from the~LP. Given a
feasible solution to the dual of the~LP, it suffices to interpret the values of
variables in the dual solution as coefficients in the combination of the primal
constraints, i.e., each constraint in our LP is multiplied by the value of the
corresponding dual variable and then all constraints are summed together. This
would give a~proof that the value of the objective function of our LP (the ratio
of the amortized cost of \MTLM to the cost of \OPT in a single phase) is at most
the value of the dual solution. By the strong duality, if the coefficients are
taken from an optimal dual solution, the obtained bound on the ratio is at most
$R_0$. If we sum this property over all phases, this implies that \MTLM is
$R_0$-competitive.


\subsection{Studying LP Output for MTLM}
\label{sec:lp_conclusions}

The LP presented above allowed us to numerically find the ``hard instances'' for
the amortized analysis of \MTLM, i.e., instances consisting of a single phase,
on which the amortized cost of \MTLM is arbitrarily close to $R_0 \cdot \OPT$.
LP returned three such instances, depicted in \lref[Figure]{fig:lp_output}. Two of
them (called \emph{linear instances}) were later generalized to the linear play
(cf.~\lref[Section]{sec:linear_play}) and the third (called \emph{bipartite
instance}) --- to the bipartite play (cf.~\lref[Section]{sec:bipartite_play}). An
additional hard instance is the finishing play, which corresponds to the case
$\COPT = 0$ described in the previous section. All these adversarial plays were
later used in our lower bound construction.

It is important to observe that while the LP always gives a correct upper bound
for the \MTLM-to-\OPT ratio, if we want to lower-bound this ratio, two issues
need to be overcome.  We discuss them on the example of the linear instance and
later we indicate the necessary changes for the bipartite instance. In our
description, we use $\alpha$ and $t$ as defined in \lref[Section]{sec:lower},
i.e., $\alpha = 1/(R_0 - 1)$ and $t = 1 + 1/R_0$.

\begin{figure}[t]
\begin{center}
\includegraphics[width=.8\textwidth]{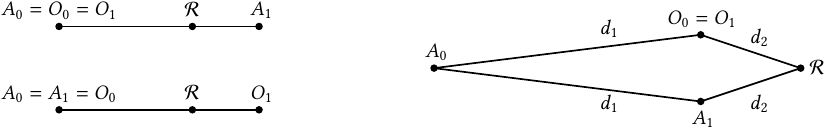}
\end{center}
\caption{
Tight instances for the amortized analysis of \MTLM as returned by the LP: two linear 
instances on the left and the bipartite instance on the right. 
The distances to $\req$ represent the average distances to the requests of the phase.}
\label{fig:lp_output}
\end{figure}

The first issue is that the LP returns only the the \emph{average} distance from
particular points of $V = \{A_0,A_1,O_0,O_1\}$ to the phase requests.  One may
think that the LP returns a metric space, with one of the points denoted~$\req$:
the distance from a point to $\req$ corresponds to the average distance from
such point to phase requests. For an actual input, we need to distribute these
requests so that the costs of \MTLM and \OPT are unchanged. For example, in one
of the linear instances (the bottom-left part of \lref[Figure]{fig:lp_output}),
LP assumes that the metric space consists of three points. The outer points are
at distance~$1$: the first one contains the initial positions of \MTLM and \OPT
and the final position of \MTLM, and the second one --- the final position of
\OPT. The inner point is $\req$, its distances to the outer points are $t/c_0
\approx 0.676$ and $(c_0-t)/c_0 \approx 0.324$, respectively. If we placed all
requests at~$\req$, then the cost of serving requests by \MTLM would be
unchanged, but \MTLM would then migrate the file to $\req$ and not the the point
$A_1$ returned by the LP. Thus, we need to (i) distribute the requests, so that
their average distances to other metric points remain the same, (ii) ensure that
the migration choices for \MTLM remain unchanged. For the linear play, this is
simply achieved by placing $c_0-t$ requests at point $a$ and the remaining $t$
requests at point $b$. 

The second issue is that the LP only \emph{lower-bounds} the cost of \OPT, and
its actual cost might be in fact higher. (In other words, the
inequalities in the proof of \lref[Lemma]{lem:opt_lower_bound} may be strict.)
Luckily, the structure of the examples returned by the LP makes it possible for
the lower bound and the actual cost of \OPT to coincide. In the linear play,
this is achieved by placing \emph{initial} $c_0-t$ requests at point $A_0 = O_0$
and the \emph{subsequent} $t$ requests at the second point. 

Dealing with these issues is different for the LP output that constitutes the
bipartite instance, depicted on the right side of
\lref[Figure]{fig:lp_output}. The presented graph constitutes a tight example
for \MTLM for any values of $d_1$ and $d_2$. Again, if we placed all requests
at $\req$, then \MTLM would migrate its file to this point. To forbid this, we
distribute the phase requests in a large set $S$, whose distances to points
$A_1$ and $O_0$ are equal to $d_2$. This modification preserves the serving
costs of \OPT and \MTLM and it discourages \MTLM from migrating to any node
from $S$. (Further modifications that we made for the bipartite play are
designed to prevent \emph{any} algorithm to migrate to nodes from $S$.)


\subsection{LP Analysis of DLM-like Algorithms} 
\label{sec:lp_DLM}

Now we show how to adapt the LP from the previous section to analyze 
\DLM-type algorithms. Recall that 
after $1.75\,D$ requests, \DLM evaluates the geometry of the so-far-received
requests and decides whether to continue this phase or not. Although the final
parameters of \DLM are elegant numbers (multiplicities of~1/4), they were obtained
by a tedious optimization process using the LP we present below.
Furthermore, the LP below does not give us an~explicit rule for continuing the
phase; it only tells that \DLM is successful either in a short or in a long phase.
We elaborate more about these issues in \lref[Section]{sec:using_lp}.

Recall that in a phase, \DLM considers three sets of requests 
$\req_1$, $\req_2$, and~$\req_3$. Set $\req_i$ contains consecutive 
$\delta_i \cdot D$ requests, where $\delta_i$ is the parameter of
\DLM. First, assume that \DLM always processes three parts and afterwards it
moves the file to a point $A_3$ that minimizes the function
\[
  h (x) = [ A_0 ,x ] + \beta_1 \cdot [ x, \req_1 ] + \beta_2 \cdot [ x, \req_2 ] + \beta_3 \cdot [x, \req_3 ],
\]
where $\beta_i$ are the parameters that we choose later. We denote the strategy of
an optimal algorithm by $\OPTL$ (short for \textsc{Opt-Long}). Let $O^L_0$,
$O^L_1$, $O^L_2$ and $O^L_3$ denote the trajectory of $\OPTL$ ($O^L_0$ is the
initial position of the $\OPTL$'s file at the beginning of the phase, and
$O^L_i$ is its position right after the $i$-th part of the phase). This time
$V = \{ A_0, A_3, O^L_0, O^L_1, O^L_2, O^L_3 \}$ and $V' = V \cup \{ \req_1,
\req_2, \req_3 \}$. Analogously to the previous section, we obtain the
following LP.
\begin{equation*}
\begin{array}{ll}
\emph{maximize \CALGL} & \\[0.3em]
\emph{subject to:} & \\[0.2em]
\quad \CALGL = [A_0,A_3]+ \sum_{i=1,2,3} \delta_i \cdot [A_0,\req_i] + \phi \cdot ([A_3,O^L_3]-[A_0,O^L_0]) & \\[0.2em]
\quad \COPTL = 1	 					& \\[0.2em]
\quad \COPTL = \sum_{i=1,2,3} \left( \COPTLreq(i) + \COPTLmove(i) \right) &\\[0.2em]
\quad \COPTLmove(i) \geq [O^L_{i-1},O^L_i] & \text{for } i=1,2,3 \\[0.2em]
\quad 2 \cdot \COPTL(i) \geq \delta_i \cdot [O_{i-1}^L, \req_i] + \delta_i \cdot [O_i^L, \req_i] + (2-\delta_i) \cdot \COPTLmove(i) & \text{for } i=1,2,3\\[0.2em]
\quad h(A_3) \leq h(v) 				& \text{for all } {v \in V}  \\[0.2em]
\quad 0 \leq [v_1,v_3] \leq [v_1,v_2] + [v_2,v_3]  		& \text{for all } v_1,v_2,v_3\in V'
\end{array}
\end{equation*}

We note that such parameterization alone does not improve the competitive ratio, 
i.e., for any choice of parameters $\delta_i$ and $\beta_i$, 
the objective value of the LP above is at least $R_0 \approx 4.086$. 

However, as stated in \lref[Section]{sec:notation}, \DLM verifies if after two parts
it can migrate its file to a~point~$A_2$ being the minimizer of the function
\[
  g (x) = [ A_0 ,x ] + \beta'_1 \cdot [ x, \req_1 ] + \beta'_2 \cdot [ x, \req_2 ],
\]
where $\beta'_i$ are the parameters that we choose later. 

In our analysis presented in \lref[Section]{sec:algorithm}, 
we gave an explicit rule whether the migration to $A_2$ should
take place. However, for our LP-based approach, we follow a slightly different
scheme. Namely, if the migration to $A_2$ guarantees that the amortized cost in
the short phase (the first two parts) is at most~$4$ times the cost of
\emph{any} strategy for the short phase, then \DLM may move to~$A_2$ and we
immediately achieve competitive ratio~$4$ on the short phase. Otherwise, we
may add additional constraints to the LP, stating that the competitive ratio of an
algorithm which moves to~$A_2$ is at least~$4$ (against any chosen strategy
$\OPTS$). Analogously to $\OPTL$, the trajectory of $\OPTS$ is described by
three points: $O^S_0$, $O^S_1$, and $O^S_2$. This allows us to strengthen our LP
by adding the following inequalities:
\begin{equation*}
\begin{array}{ll}
	\CALGS = [A_0,A_2]+ \sum_{i=1,2} \delta_i \cdot [A_0,\req_i] + \phi \cdot ([A_2,O^S_2]-[A_0,O^S_0]) & \\[0.2em]
	\COPTS = \sum_{i=1,2} \left( \COPTSreq(i) + \COPTSmove(i) \right) &\\[0.2em]
	\COPTSmove(i) \geq [O^S_{i-1},O^S_i] & \text{for } i=1,2 \\[0.2em]
  2 \cdot \COPTS(i) \geq \delta_i \cdot [O_{i-1}^S, \req_i] + \delta_i \cdot [O_i^S, \req_i] + (2-\delta_i) \cdot \COPTSmove(i) & \text{for } i=1,2\\[0.2em]
	g(A_2) \leq g(v) 				& \text{for all } {v \in V}  \\[0.2em]
	\CALGS \geq 4 \cdot \COPTS
\end{array}
\end{equation*}
We also change $V$ to $\{ A_0, A_2, A_3, O^L_0, O^L_1, O^L_2, O^L_3, O^S_0, O^S_1, O^S_2 \}$, 
both in new and in old inequalities.

When we choose $\phi = 3$, fix phase length parameters to be $\delta_1 = 1$,
$\delta_2 = 0.75$, $\delta_3=0.5$ and parameters for functions $g$ and $h$ to
be $\beta'_1=2$, $\beta'_2=1$, $\beta_1=1, \beta_2=0.25$ and $\beta_3=0.75$, we
obtain that the value of the above LP is $4$. Again, this can be interpreted
as a numerical argument that \DLM is indeed 4-competitive.

\subsection{From LP to Analytic Proof}
\label{sec:using_lp}

Admittedly, the LP presented above does not lead directly to the algorithm
\DLM and its proof presented in \lref[Section]{sec:algorithm}, although it can
be used to achieve them in a quite streamlined fashion. First issue concerns
the actual choice of parameters used in LP (coefficients $\phi, \delta_1,
\delta_2, \delta_3, \beta'_1, \beta'_2, \beta_1, \beta_2$ and~$\beta_3$). They
were chosen semi-automatically using the grid search first and then fine-tuned
using local search. Surprisingly, such approach yielded the objective value
(bound on the competitive ratio) being an integer $4$, and we were not able to
improve it further. Moreover, the optimized parameters also turned out to be 
``nice numbers'' (rational numbers with small denominators).

As already observed, the dual variables in the optimal solution can be used in
a formal proof for \DLM competitiveness. However, the dual variables returned
by LP solvers were not round fractions. To alleviate this issue, we simplified 
the dual program by iteratively choosing a single constraint, dropping this 
constraint and verifying whether the objective value remains the same. 
The reduced dual LP still guaranteed the competitive ratio of $4$, but its 
simplified form allowed the LP solver to find a solution consisting only of 
``nice numbers'' (multiplicities of $1/4$).

Finally, the proof that we obtained, by summing up the LP inequalities multiplied by the
dual solution values, naturally decomposes into two parts: one corresponding
to the long phase and one corresponding to the short phase. The short phase part, 
when summed up, gives rise to a~single inequality. This inequality encompasses 
the key property of the scenarios where the long phase should be chosen. 
It therefore describes the decision rule used in the algorithm \DLM.


\section{Conclusions}

While in the last decade factor-revealing LPs became a standard tool for
analysis of approximation algorithms, their application to online algorithms so
far have been limited to online bipartite matching and its variants (see,
e.g.,~\cite{mehta-jacm,mahdian-stoc}) and for showing lower
bounds~\cite{azar-soda}. In this paper, we successfully used
the factor-revealing LP to bound the competitive ratio of an~algorithm for
an~online problem defined on an arbitrary metric space. We believe that similar
approaches could yield improvements also for other online graph problems.


\bibliographystyle{alpha}
\bibliography{references}

\end{document}